\newcommand{\toremove}[1]{}
\renewcommand{\toremove}[1]{{\color{red}#1}}
\newcommand{\cradd}[1]{#1}
\newcommand{\crremove}[1]{}
\begin{document}

\title{Bundling Linked Data Structures for Linearizable Range Queries}         


\author{Jacob Nelson-Slivon}
\affiliation{
  \institution{Lehigh University, USA}            
  \country{}
}
\email{jjn217@lehigh.edu}          

\author{Ahmed Hassan}
\affiliation{
  \institution{Lehigh University, USA}           
  \country{}
}
\email{ahh319@lehigh.edu}         

\author{Roberto Palmieri}
\affiliation{
  \institution{Lehigh University, USA}           
  \country{}
}
\email{palmieri@lehigh.edu}

\begin{abstract}
We present bundled references, a new building block to provide linearizable range query operations for highly concurrent lock-based linked data structures. Bundled references allow range queries to traverse a path through the data structure that is consistent with the target atomic snapshot. We demonstrate our technique with three data structures: a linked list, skip list, and a binary search tree. Our evaluation reveals that in mixed workloads, our design can improve upon the state-of-the-art techniques by 1.2x-1.8x for a skip list and 1.3x-3.7x for a binary search tree. We also integrate our bundled data structure into the DBx1000 in-memory database, yielding up to 40\% gain over the same competitors.
\end{abstract}

\begin{CCSXML}
<ccs2012>
<concept>
<concept_id>10010147.10011777.10011778</concept_id>
<concept_desc>Computing methodologies~Concurrent algorithms</concept_desc>
<concept_significance>500</concept_significance>
</concept>
<concept>
<concept_id>10002951.10002952.10002971</concept_id>
<concept_desc>Information systems~Data structures</concept_desc>
<concept_significance>500</concept_significance>
</concept>
</ccs2012>
\end{CCSXML}

\ccsdesc[500]{Computing methodologies~Concurrent algorithms}
\ccsdesc[500]{Information systems~Data structures}

\keywords{Concurrent Data Structures, Range Queries, Fine-grain Synchronization}  

\maketitle

\section{Introduction}

Iterating over a collection of elements to return those that
fall within a contiguous range (also known as a \textit{range query} operation) is an essential feature for data repositories.
In addition to database management systems, which historically deploy support for range queries (through predicate reads or writes), recent key-value stores (e.g., RocksDB~\cite{facebook2020rocksdb} and others~\cite{DBLP:conf/sigcomm/EscrivaWS12,DBLP:conf/eurosys/MaoKM12,silk,pebblesdb,google2019leveldb}) enrich their traditional APIs to include range query operations.

With the high-core-count era in full swing, providing high-performance range query operations that execute concurrently with modifications is challenging.
On the one hand, ensuring strong correctness guarantees of range queries, such as linearizablity~\cite{linearizability}, requires that they observe a consistent snapshot of the collection regardless of any concurrent update that may take place. On the other hand, since range queries are naturally read-only operations, burdening them with synchronization steps to achieve strong correctness guarantees may significantly deteriorate their performance.


In this paper we propose \textit{bundled references}\footnote{This work builds on an initial design appeared in~\cite{bundling-poster-ppopp21}.}, a new building block to design lock-based linearizable concurrent linked data structures optimized to scale up performance of range query operations executing concurrently with update operations. 
The core innovation behind bundled reference lies in adapting the design principle of 
Multi Version Concurrency Controls (MVCC)~\cite{wu2017empirical, MV-STM} to lock-based linked data structures, and improving it by eliminating the overhead of determining the appropriate version to be returned.
Bundled references achieve that by augmenting each link in a data structure with a record of its previous values, each of which is tagged with a timestamp reflecting the point in (logical) time when the operation that generated that link occurred. In other words, we associate timestamps to references connecting data structure elements instead of the nodes themselves.

The bundled reference building block enables the following characteristics of the data structure:
\begin{itemize}
\item Range query operations \crremove{are linearized} \cradd{establish their snapshot} just before reaching the first element in the range, which helps reduce interference \cradd{from} \crremove{with} ongoing and subsequent update operations \cradd{on the range};

\item Data structure traversals, including those of contains and update operations, may proceed uninstrumented
until reaching the desired element(s).

\item Well-known memory reclamation techniques, such as EBR~\cite{ebr}, can be easily integrated into the bundled references to reclaim data structure elements, which reduces the space overhead of bundling.
\end{itemize}


We demonstrate the efficacy of bundling by applying it to three widely-used lock-based ordered Set implementations, namely a linked list, a skip list, and a binary search tree (BST).
While the linked list is a convenient data structure to illustrate the details of our design that favors range query operations, the skip list and the BST are high-performance data structures widely used in systems (such as database indexes) where predicate reads are predominant.
In these new data structure we augment the existing links with bundled references to provide linearizable range queries.

In a nutshell, the history of a link between nodes (called a \textit{bundle}) is consistently updated every time a successful modification to the data structure occurs, and its entries are labeled according to a global timestamp.
Data structure traversals, including those of range query and contains operations, do not need to use bundles until they reach their target range or element. Then, range queries read the global timestamp and follow a path made of the latest links marked with a timestamp lower than (or equal to) the previously read timestamp. Contains operations also use bundles after the non-instrumented traversal to consistently read the target element, but they do not need to access the global timestamp.



Bundling follows the trend of providing abstractions and techniques to support range query operations on highly concurrent data structures~\cite{java-util-concurrent-lib,rlu,ebr-rq,snapcollector, DBLP:conf/icpp/RodriguezS20}. 
Among them, the most relevant competitors include read-log-update (RLU)\cite{rlu}, the versioned CAS (vCAS) approach~\cite{vcas-ppopp21}, and a solution based on epoch-based reclamation (EBR-RQ)~\cite{ebr-rq}. We contrast their functionalities in the related work section and their performance in the evaluation section.


\toremove{Analyzing the performance results, we found that in a mixed workload, bundling offers up to 1.4x and 3.7x improvement over the closest and next closest competitors.
\textbf{Further, bundling achieves a more consistent performance profile across different configurations than RLU and EBR-RQ, whose design choices can lead them to prefer specific workloads.
Compared with vCAS, bundling's performance prevails in read-dominated workload, while in write-dominated workloads, vCAS equals (or surpasses) bundling.}} 

In summary, analyzing the performance results we found that bundling achieves a more consistent performance profile across different configurations than RLU and EBR-RQ, whose design choices can lead them to prefer specific workloads. Concerning mixed workloads, bundling offers up to 1.8x improvement over EBR-RQ, and up to 3.7x improvement over RLU.
Compared with vCAS, bundling's performance prevails in all linked list experiments (reaching up to 2x improvement). Also, bundling outperforms vCAS in both skip list and BST for read-dominant cases, reaching up to 1.5x improvement. In write-dominated workloads, vCAS equals bundling in skip list, and outperforms it in BST at high thread count.
We also integrate our bundled skip list and BST as indexes in the DBx1000 in-memory database and test them using the TPC-C benchmark. We find that bundling provides up to 1.4x better performance than the next best competitor at high thread count.

\section{Related Work}
\label{sec:rel-work}

\textbf{Linearizable range queries.}
Existing work has focused on providing range queries through highly-specific data structure implementations~\cite{karytree,ctrie,catree,leaplist,DBLP:conf/ppopp/BronsonCCO10,kiwi}.
While recognizing their effectiveness,
their tight dependency on the data structure characteristics makes them difficult to extend to other structures, even if manually. 
The literature is also rich with effective concurrent data structure designs that lack range query support and cannot leverage the above data structure specific solutions to perform range queries. 
This motivates generalized solutions, which achieve linearizable range queries by applying the same technique to various data structures~\cite{ebr-rq, rlu, snapcollector, bundling-poster-ppopp21}.



Read-log-update (RLU)~\cite{rlu} is a technique in which writing threads keep a local log of updated objects, along with the logical timestamp when the update takes effect. 
When no reader requires the original version, the log is committed.
It extends read-copy-update (RCU)~\cite{rcu} to support multiple object updates. 
Range queries using RLU are linearized at the beginning of their execution, after reading a global timestamp and fixing their view of the data structure.
However, in RLU, updates block while there are ongoing RLU protected operations,
as it only commits its changes after guaranteeing no operation will access the old version.
Bundling minimizes write overhead because new entries are added while deferring the removal of outdated ones.




Snapcollector~\cite{snapcollector} logs changes made to the data structure during an operation's lifetime so that concurrent updates are observed.
A range query first announces its intention to snapshot the data structure by posting a reference to an object responsible for collecting updates.
It traverses as it would in a sequential setting, then checks a report of concurrent changes it may have missed.
The primary difference with respect to RLU is that range queries are linearized at the end of the operation, after disabling further reports.

Although the construction of Snapcollector is wait-free, this method may lead a range query to observe reports of changes that were already witnessed during its traversal.
Creating and announcing reports penalizes operation performance; not to mention the memory overhead required to maintain these reports.
We experimentally verify that the cost of these characteristics is high.
With our bundling approach, a range query visits nodes in the range only once to produce its view of the data structure and is linearized when it reads the global timestamp right before entering the range.
An extension of Snapcollector enables snapshotting only a range of the data structure instead of all elements~\cite{chatterjee2017lock}.
However, this approach continues to suffer many of the same pitfalls as the original design. 
In addition to these, concurrent range queries with overlapping key ranges are disallowed.

Arbel-Raviv and Brown~\cite{ebr-rq} build upon epoch-based memory reclamation (EBR) to provide linearizable range queries.
In this method, range query traversals leverage a global timestamp to determine if nodes, annotated with a timestamp by update operations, belong in their snapshot.
In order to preserve linearizability, remove operations announce their intention to delete a node before physically removing them and adding them to the list of to-be-deleted nodes, or limbo list. 
Range queries scan the data structure, the announced deletions, and limbo list to determine which nodes to include in their view, potentially resulting in a situation where nodes are observed multiple times.
The design also prioritizes update-mostly workloads, since range queries' timestamp updates conflict.
Our bundling approach enhances performance of range queries by allowing them to traverse only the nodes in the range without needing to validate its snapshot.

\crremove{
The focus of bundling is on blocking data structures since the bundling abstraction blocks internally.
A very recent work~\cite{vcas-ppopp21} introduces a solution that retains the non-blocking guarantee of data structures through the so called versioned CAS object, or vCAS. 
vCAS is used to record versions for the CAS objects that compose the data structure.
Despite the commonalities between bundling and vCAS (i.e., both keep a list of versioned values),
vCAS is designed specifically for lock-free data structures.
This makes porting lock-based data structures not procedural and error prone since it might not be clear which component of the data structure should be replaced by vCAS objects.
For example, to compare vCAS performance against bundling we port three lock-based data structures to use vCAS by replacing both the  pointer(s) and metadata (e.g., flags for logical deletion) with vCAS objects, but not the locks.


On the other hand, bundling takes a higher-level approach in recognizing that a given version of a linked data structure corresponds to the set of nodes that are reachable at a given instant.
We offer a framework for supporting range queries that revolves around the manipulation of links, which naturally captures the semantics of versioning in the target data structures. 
These manipulations are protected by fine-grained critical sections that adapt well to embrace characteristics of different lock-based data structure designs.
As we show later in the evaluation, this approach demonstrates performance advantages over vCAS in different workloads.
}

Bundling, like the competitors included thus far, focuses solely on supporting linearizable range queries.
Other techniques target linearizable bulk update operations, which mutate several data structure elements atomically~\cite{DBLP:conf/icpp/RodriguezS20}.
These methodologies are not optimized specifically for range queries and therefore are not included in our comparisons.

\textbf{MVCC.}
Multi-version concurrency control (MVCC), widely used in database management systems, relies on timestamps to 
coordinate concurrent accesses.
Many different implementations exist~\cite{ben2019multiversion, bernstein1983multiversion, neumann2015fast, larson2011high, lim2017cicada}; all rely on a multiversioned data repository where each shared object stores a list of versions,
and each version is tagged with a creation timestamp.
Transactions then read the versions of objects that are consistent with their execution.





The de facto standard for version storage in MVCC systems is to maintain a list of versions (\textit{version list}) for each object that is probed during a read~\cite{wu2017empirical}, with innovations targeting this particular aspect.
One example, Cicada~\cite{lim2017cicada}, uses a similar idea of installing \texttt{PENDING} version for every written object as the first step in its validation phase. The main difference in bundling is that pending entries only exist for a short duration surrounding the linearization point. X-Engine~\cite{huang2019x} and KiWi~\cite{kiwi} use version lists for objects, rather than links like in bundling, which cause them to visit nodes not belonging to their snapshot.
Multi-versioned Software Transactional Memory~\cite{MV-STM} applies MVCC on in-memory transactions. However, it inherently requires \crremove{speculative accesses} \cradd{update transactions to instrument all accesses (and validate them)} while bundling exploits data structure specific semantics to avoid false conflicts and improve performance.

\textbf{Persistent data structures.}
The bundled reference abstraction is similar in spirit to the concept of \textit{fat nodes} in \textit{persistent data structures}~\cite{driscoll1986making}. 
In principle, persistent data structures are those which maintain all previous versions of the data structure.
Bundling aims at providing efficient linearizable range queries in highly concurrent workloads, while persistent data structures are commonly used in functional programming languages to maintain theoretical requirements regarding object immutablility~\cite{okasaki1999purely, hickey2008clojure} and algorithms requiring reference to previous state~\cite{sarnak1986planar, chien2001efficient}.


\cradd{
\textbf{vCAS.}
vCAS~\cite{vcas-ppopp21} implements range queries for lock-free data structures 
while retaining the non-blocking guarantee of data structures through their so called vCAS objects. Like bundling, each vCAS object maintains a versioned list of values but for CAS objects. Operations can query the state of this vCAS object at a certain snapshot using a \textit{versioned read} interface.
Since CAS objects in lock-free data structures are often the links between data structure nodes, vCAS enables traversals to observe links at a given snapshot, which can then be leveraged to implement range queries.

Applying the vCAS technique to lock-free data structures is straightforward since it only requires replacing CAS objects with vCAS objects. On the other hand, the requirements when applying the vCAS technique to lock-based data structures are unclear since there is no simple rule to determine which components should be replaced by vCAS objects. In principle, there is a mismatch between the target abstraction of vCAS and its application to lock-based data structures. In contrast, bundling is designed specifically to represent references and more precisely captures the meaning of versioning in the context of linked data structures. In our evaluation, we demonstrate vCAS can be ported to lock-based data structures by using vCAS objects in place of both pointer(s) and metadata (e.g., flags for logical deletion), but not locks. Despite that, it is still not clear whether vCAS is broadly applicable to lock-based data structures.

The replacement strategy required for vCAS illuminates another important difference.
Bundling augments references in the data structure and keeps the original links unchanged, while vCAS replaces them.
Because of this, bundling provides flexibility in how operations traverse the data structure, either using bundles or the original links, which is a fundamental enabler for our optimized traversals.
For vCAS, the result of tightly coupling the versioned CAS object to the underlying field is that all operations must pass through the vCAS API, which leads to additional costs for vCAS when traversals are long (see Section~\ref{sec:evaluation}).
Whether or not our optimization can be applied in the case of vCAS is an open question, but it would likely include non-trivial changes to the current abstraction.
Similarly, making bundling lock-free is not an insignificant challenge and potentially requires replacing locks with multi-CAS operations to atomically update each link along with its augmented bundle. Unifying the two is an intriguing future direction.
}

\cradd{
\textbf{Optimized Traversals.}
 Among other publications on optimized traversals~\cite{otblist,lazylist,lazyskiplist,citrus,afek2011towards,crain2013contention,david2015asynchronized}, two recent works are worth mentioning. Like their predecessors, both capture the notion of data structure traversals and their logical isolation from an operation's ``real'' work. In the first, Freidman et al.~\cite{friedman2020nvtraverse} define a Traversal Data Structure as a lock-free data structure satisfying specific properties that allow operations to be split into two phases. These phases are the Traversal phase and the Critical phase, of which only the latter requires instrumentation for making a data structure consistently persistent in non-volatile memory. This resembles our notions of pre-range and enter-range, but is applied specifically in the context of lock-free data structures and non-volatile memory. Similarly, Feldman et al.~\cite{feldman2020proving} construct a proof model that is centered around proving the correctness of optimistic traversals. The model combines single-step compatibility with the forepassed condition to demonstrate that, for a given operation, a node has been reachable at some point during the traversal. An interesting extension would be to understand the applicability of this model when range queries are added to the data structure APIs. Due to versioning, the current conditions may not fully capture the requirements to prove that range queries are linearizable.}

\section{The Bundle Building Block}
\label{sec:overview}




The principal idea behind \textit{bundling} is the maintenance of a historical record of physical changes to the data structure so that range queries can traverse a consistent snapshot.
As detailed below, the idiosyncrasy of bundling is that this historical record stores \textit{links} between data structure elements that are used by range query operations to rebuild the exact composition of a range at a given (logical) time.

Update operations are totally ordered using a global timestamp, named \texttt{globalTs},
which is incremented every time a modification to the data structure takes place \crremove{(i.e., when an update operation reaches its linearization point)}.

Every link in the data structure is backed by a \textit{bundle}, implemented as a list of \textit{bundle entries} (Listing~\ref{lst:classes1}). Each bundle entry logs the value of the link and the value of \texttt{globalTs} at the time the link was added to the bundle.
Whenever an update operation reaches its linearization point, meaning when it is guaranteed to complete, it prepends a bundle entry consisting of the newest value of the link and the value of the global timestamp.
Because of this, the head of the bundle always reflects the link's latest value.

It is worth noting that other solutions~\cite{vcas-ppopp21,ebr-rq} also use a global timestamp but assign the responsibility of incrementing it to range query operations. 
We experimentally verify (see our evaluation section) that incrementing the global timestamp upon completion of range query or update operation is irrelevant in terms of performance for bundling. Our decision of letting update operations increment \texttt{globalTs} simplifies traversals with bundles since it allows each bundle entry to be tagged with a unique timestamp.

Since each link's history is preserved through the bundles, range queries simply need to reach the range, read the global timestamp, and traverse range of the linked data structure using the newest values no larger than the observed global timestamp.
This design is inherently advantageous when pruning bundle entries. 
In fact, a bundle entry may be removed (or recycled)
if an entry is no longer the newest in the bundle and no range query needs it.

\begin{figure}[t]
    \centering
    \includegraphics[width=0.45\textwidth]{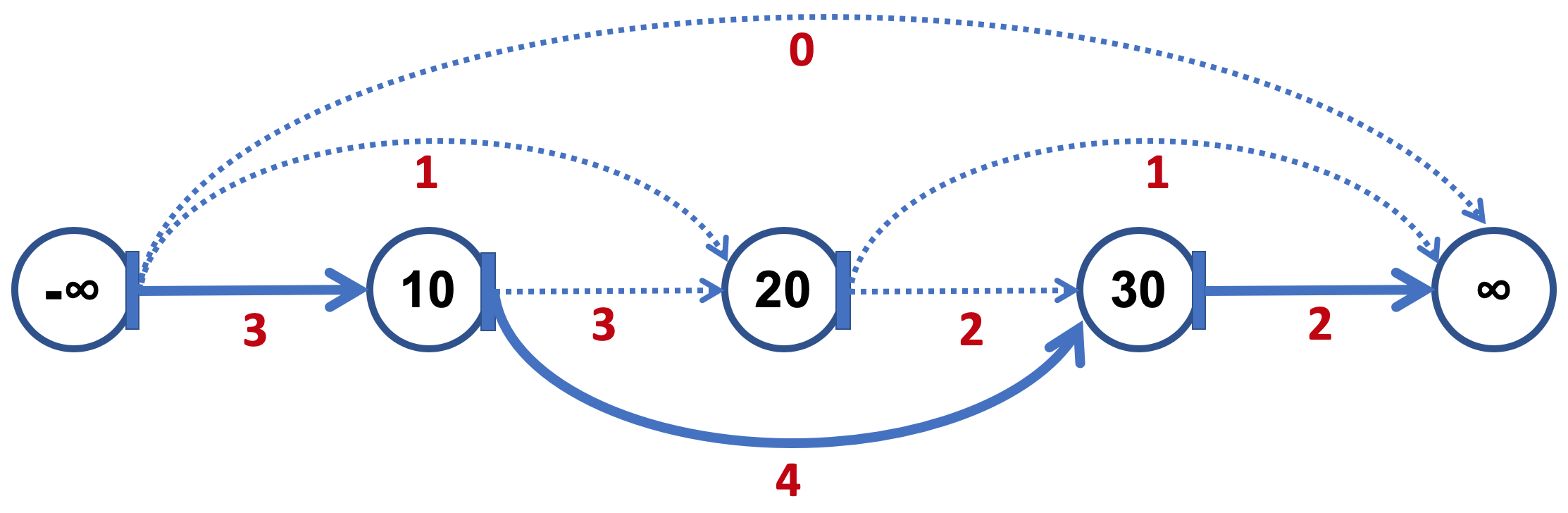}
    \caption{An example of using bundled references in a linked list. The path made of solid lines represents the state of the linked list after all update operations take place. Edges are labeled with their respective timestamps.}
    \label{fig:bundle-example}
\end{figure}

Figure~\ref{fig:bundle-example} shows an example on how bundles are deployed in a linked list.
As shown in the figure, the next pointer of each node is replaced by a bundle object that encapsulates the history of this next pointer. The figure shows the state of the linked list and its bundles after the following sequence of operations (on an empty linked list): \texttt{insert(20), insert(30), insert(10), remove(20)}.

To understand how this state is generated, we assume that the list is initialized with a single bundle reference whose timestamp is ``0'' (the initial value of \texttt{globalTs}), which connects its head and tail sentinel nodes. Inserting \texttt{20} does not replace this reference. Instead, it creates a new entry in the head's bundle with timestamp ``1'' pointing to the newly inserted node as well as an entry with the same timestamp in this new node pointing to the tail node. Similarly, inserting \texttt{30} and \texttt{10} adds new bundle entries with timestamps  ``2'' and ``3'', respectively. The last operation that removes \texttt{20} also does not replace any reference. Instead, it creates a new bundle entry in \texttt{10}'s bundle (with timestamp ``4'') that points to \texttt{30}, which reflects physically deleting \texttt{20} by updating the predecessor's next pointer.

Now assume that different range queries start at different times concurrently with those update operations. For clarity, we name a range query $R_i$
if it reads the value $i$ of \texttt{globalTs}, and for simplicity we assume its range matches the entire key range. Regardless of the times at which the different nodes are traversed, each range query is always able to traverse the proper snapshot of the list that reflects the time it started. For example, $R_0$ will skip any links in the range added after it started because all of them have timestamp greater than ``0''. Also, $R_3$ will observe \texttt{20} even if it reaches \texttt{10} after \texttt{20} is deleted. This is because in that case it will use the bundle entry whose timestamp is ``3'', which points to \texttt{20}.

The solid lines in the figure represent the most recent state of the linked list. Different insights can be inferred from this solid path. First, the references in this path are those with the largest timestamp in each bundle. This guarantees that any operation (including range queries) that starts after this steady state observes the most recent abstract state of the list. Second, once the reference with timestamp ``4'' is created, \texttt{20} becomes no longer reachable by any operation that will start later, because this operation will observe a timestamp greater than (or equal to) ``4''. Thus,
unreachable elements can be concurrently reclaimed.

\vspace{-2pt}
\subsection{Bundle Structure}
\label{sec:bundledref}

\textit{Bundling} a data structure entails augmenting its links with a bundle to produce a bundled reference.
Importantly, bundling does not replace links. For example, in Listing~\ref{lst:lazylistnode}, each node in a linked list has a \texttt{next} field and an associated \texttt{bundle} field.
A design invariant that we maintain to support this behavior is that the value of a link is equal to the newest entry in its corresponding bundle. 
Later, we describe how duplicating the newest link can be leveraged to improve performance when bundles are not needed to be traversed.

As shown in Listing~\ref{lst:classes1}, a \textit{bundle} is a collection of entries.
Each bundle entry must contain a pointer value, \texttt{ptr}, and the timestamp associated with this value, \texttt{ts}.

Bundles are accessed using a \texttt{DereferenceBundle} API, which returns the value of the bundled reference that \textit{satisfies} a given timestamp.
We say an entry \textit{satisfies} timestamp $ts$ if it is the newest entry in the bundle when the global timestamp equaled $ts$, which is the bundle entry with the largest timestamp that is less than or equal to $ts$.

\begin{minipage}{.22\textwidth}
\begin{lstlisting}[language=C++, frame=single, escapechar=|, basicstyle=\scriptsize,  numbers=left, stepnumber=1, numbersep=5pt, xleftmargin=5pt, framexleftmargin=7pt, label=lst:classes1, caption= Bundle.]
timestamp_t globalTs;
class BundleEntry {|\label{line:bundleentry1}|
    Node * ptr;
    timestamp_t ts;
    BundleEntry * next;
}|\label{line:bundleentry2}|
class Bundle {|\label{line:bundle1}|
    BundleEntry * head;
}|\label{line:bundle2}|
\end{lstlisting}
\end{minipage}
\hspace{5pt}
\begin{minipage}{.22\textwidth}
\begin{lstlisting}[language=C++, frame=single, escapechar=|, basicstyle=\scriptsize,  numbers=left, stepnumber=1, numbersep=5pt, xleftmargin=5pt, framexleftmargin=7pt, label=lst:lazylistnode, caption= Linked List Node.]
class Node {|\label{line:node1}|
    key_t key;
    val_t val;
    lock_t lock;
    bool deleted;
    // Bundled reference.
    Node * next;
    Bundle bundle;
}|\label{line:node2}|

\end{lstlisting}
\end{minipage}

In order to perform a traversal, each bundled data structure must implement two functions to determine the next node in the structure.
The first is \texttt{GetNext}, which uses the original links; the second is \texttt{GetNextFromBundle}, which uses \texttt{DereferenceBundle} internally. 
More details about these two functions will follow.

We implement our bundles as a
list of entries, sorted by their timestamp, but note that its implementation is orthogonal to the bundling framework as long as it supports the \texttt{DereferenceBundle} API.
So far we have assumed that a bundle may hold infinite entries. We address memory reclamation in Section~\ref{sec:memreclamation}.


\subsection{Bundles and Update Operations}
\label{sec:updates}

Generally speaking, an update operation has two phases.
The operation first traverses the data structure to reach the desired location where the operation should take place, then performs the necessary changes.
In bundling, only places where pointers are changed must be supplemented with operations on bundles; the operation's traversal procedure remains unaltered.

\begin{algorithm}[h]
\begin{scriptsize}
\KwIn{The set of bundles to update, $bundles$; the bundles' new values, $ptrs$}
\Begin{
\For{($b$,$p$) \textbf{in} ($bundles$, $ptrs$)}{
    $newEntry \leftarrow$ \textbf{new} $BundleEntry(ptr)$  \tcp*[f]{Pending entry} \\
    \label{line:pending}
    \While{true}{
        $newEntry.next \leftarrow expected \leftarrow bundle.head$ \\
        \lWhile{$bundle.head.ts = $ PENDING\_TS}{$\{\}$}
        \label{line:updatewaitpendingbundle}
        \lIf{AtomicCompareAndSwap(\&$bundle.head$, $expected$, $newEntry$)}{
        \label{line:preparecas}
            \textbf{break} \\
        }
    }
}
\Return AtomicFetchAndAdd(\&$globalTs$, 1) + 1
\label{line:fetchaddts}
}
\caption{PrepareBundles}
\label{algo:prepare}
\end{scriptsize}
\end{algorithm}
\vspace{-28pt}
\begin{algorithm}[h]
\begin{scriptsize}
\KwIn{A bundle to prepare, $bundle$; the linearization timestamp, $ts$}
\Begin{
    \For{$b$ \textbf{in} $bundles$}{
        $b.ts \gets ts$ \\
        \label{line:finalize}
    }
}
\caption{FinalizeBundles}
\label{algo:finalize}
\end{scriptsize}
\end{algorithm}

The goal for updates is to reflect changes, observable at the operation's completion, so that range queries can see a consistent view of the data structure. 
This is performed by book-ending an update's original critical section, which is defined at some point after all necessary locks are held, with \texttt{PrepareBundles} and \texttt{FinalizeBundles}, resulting in a three step process:
\begin{enumerate}
    \item \texttt{PrepareBundles} (Algorithm~\ref{algo:prepare}) inserts in the bundles of updated nodes a new entry in a \textit{pending} state (Line~\ref{line:preparecas}).
    Then, \texttt{globalTs} is atomically fetched and incremented and its new value is returned (Line~\ref{line:fetchaddts}).
    \item Next, the operation's original critical section is executed, making the update visible to other operations.
    \item Lastly, \texttt{FinalizeBundles} (Algorithm~\ref{algo:finalize}) sets the linearization timestamp of the pending entries by annotating them with the newly incremented timestamp, before releasing locks.
\end{enumerate}

The \texttt{PrepareBundles} step is crucial in the correctness of bundling. As detailed in Section~\ref{sec:rangequeries}, when a read operation traverses a node using a bundle, it waits until any pending entry is finalized to guarantee that it does not miss a concurrent update that should be included in its linearizable snapshot.
Additionally, concurrent updates attempting to add their own pending entry must block until the ongoing update is finalized (Algorithm~\ref{algo:prepare}, Line~\ref{line:updatewaitpendingbundle}). 
This is done so that concurrent updates to the same bundle are properly ordered by timestamp.
It is also possible to address this problem by assuming that all nodes whose bundles will change are locked.
We choose not to do so to make our design independent of data structure specific optimizations (see Section~\ref{sec:lazylist}).
Also note that incrementing \texttt{globalTs} must occur before any newly inserted nodes are reachable otherwise the linearizability of range queries may be compromised.


\subsection{Bundles and Range Queries}
\label{sec:rangequeries}


We consider range queries to have the following three phases:
\begin{enumerate}
    \item \texttt{pre-range}: the traversal of data structure nodes until the start of the range. 
    \item \texttt{enter-range}: the traversal from a node just outside the range to the first node falling within the range.
    \item \texttt{collect-range}: the traversal of the range, during which the result set is constructed.
\end{enumerate} 

Algorithm~\ref{algo:rangequery} outlines a range query for a bundled data structure, reflecting these steps.

\underline{Pre-range.}
The \texttt{pre-range} traversal is performed to locate the node immediately preceding the range (Lines~\ref{line:pre-range-start}-\ref{line:pre-range-end}).
Note that \texttt{GetNext} is a data structure specific procedure that returns the next node in a traversal toward the range \emph{without using bundles}.
For example, in the case of a binary search tree this would be the appropriate child link, not its corresponding bundle.
Using an uninstrumented traversal reduces overhead and improves performance by avoiding the costs associated with accessing bundles, as well as blocking on pending entries.
Once the predecessor to the range is located (Line~\ref{line:pre-range-break}), the \texttt{enter-range} phase is executed.

\underline{Enter-range.}
In sequential algorithms, the \texttt{enter-range} phase is trivial, as it is simply a dereference. 
However, for bundling it represents a critical moment that must be treated carefully since it is here that we switch to traversing a linearizable snapshot through bundles.


\begin{algorithm}
\begin{scriptsize}
\KwIn{Inclusive range to query, [$low$, $high$]; sentinel node and entry point to the data structure, $root$}
\KwOut{Set of nodes in the given range}
\Begin{
    \While{$true$}{
    \label{line:while-true}
        $pred \leftarrow root$ \tcp*[f]{Pre-range} \\
        $curr \leftarrow pred$.GetNext($low$, $high$) \\
        \label{line:pre-range-start}
        \label{line:enter-snapshot-nonrestarting}
        \While{$curr \ne nullptr$}{
            \label{line:pre-range}
            $pred \leftarrow curr$ \\
            $curr \leftarrow$ $curr$.GetNext($low$, $high$) \\
            \label{line:pre-range-next}
            \label{line:pre-range-next-nonrestarting}
            \lIf{$curr.key$ is in the range}{\textbf{break}}\label{line:pre-range-break}
        }
        \label{line:pre-range-end}
        
        $ts \leftarrow globalTs$ \tcp*[f]{Enter-range}\\
        \label{line:readglobalts}
        $curr$ $\leftarrow pred$.GetNextFromBundle($low$, $high$, $ts$) \\
        \label{line:enter-range}
        \While{$curr$ is not in the range}{
            $curr$ $\leftarrow curr$.GetNextFromBundle($low$, $high$, $ts$)
            \label{line:enter-range-while}
        }
        \label{line:collect-range-start}
        \Return CollectRange($curr$, $low$, $high$, $ts$) \tcp*[f]{Collect-range}
        \label{line:collect-range-collect}
    }
}
\caption{RangeQuery}
\label{algo:rangequery}
\end{scriptsize}
\end{algorithm}
\vspace{-25pt}
\begin{algorithm}[h]
\begin{scriptsize}
\cradd{
\KwIn{A bundle to dereference, $bundle$; the linearization timestamp, $ts$}
\Begin{
    $entry \gets bundle.head$ \\
    \lWhile{$entry.ts = $ PENDING\_TS}{$\{\}$\tcp*[f]{Maybe wait on update}}
    \lWhile{$entry.ts$ > $ts$} {
        $entry \gets entry.next$
    }
    \Return $entry.ptr$
}
}
\caption{DereferenceBundle}
\label{algo:dereferencebundle}
\end{scriptsize}
\end{algorithm}

To do so, a range query first fixes its linearizable snapshot of the data structure by reading \texttt{globalTs} into a local variable $ts$ (Line~\ref{line:readglobalts}).
This ensures that all future updates that increment the timestamp after the range query executes Line~\ref{line:readglobalts} are ignored.
Next, in Lines~\ref{line:enter-range}-\ref{line:enter-range-while}, the \texttt{enter-range} phase executes a loop of \texttt{GetNextFromBundle} to reach the first node in the range. This loop is needed to handle the cases in which nodes are inserted/deleted after the \texttt{pre-range} phase ends and before \texttt{globalTs} is read.



Internally, \texttt{GetNextFromBundle} uses \texttt{DereferenceBundle} \cradd{(Algorithm~\ref{algo:dereferencebundle})} to follow links.
Given a bundle and a timestamp $ts$, the function \texttt{DereferenceBundle} works as follows. 
The call first reads the head of the bundle and waits until the entry is not pending.
Next, it scans the bundle for the entry that satisfies $ts$ (i.e., the first entry whose timestamp is less than or equal to $ts$), returning the address of the node the entry points to.
Since it is written only by a single in-progress update, pending timestamps provide a low-contention linearization mechanism.

As we mentioned before, blocking while the most recent entry is pending is a necessary step to ensure that both range query and contains operations are correctly linearized along with update operations.
Because of this step, read operations observe updates only after the changes have become visible and the bundles are finalized.

It is worth noting that the transition from the \texttt{pre-range} phase to the \texttt{enter-range} phase is always consistent.
After the \texttt{pre-range} phase, a range query holds a reference to the predecessor of its range that was traversed to using normal links.
This node either has a pending bundle entry or a finalized one, since updates only make a node reachable after preparing the bundles and incrementing the global timestamp.
Because a range query reads \texttt{globalTs} only after reaching the node, it is guaranteed to find an entry satisfying its timestamp by calling \texttt{GetNextFromBundle}, even if it must wait for the concurrent update to finalize it first.

\underline{Collect-range.}
After the range is reached in the \texttt{enter-range} phase, the \texttt{collect-range} phase is performed by calling the data structure specific function \texttt{CollectRange}.
This function explores the data structure using bundles and the timestamp from Line~\ref{line:readglobalts} to generate the result set, returning the empty set if no nodes exist in the range.
As was the case for \texttt{GetNextFromBundle}, \texttt{CollectRange} uses \texttt{DereferenceBundle} to traverse the data structure and collect the result set.

A critical invariant that guarantees the correctness of bundling is that the \texttt{collect-range} phase always finds its required path through the bundles.
This is trivial if both deleted nodes and outdated bundle entries are never reclaimed because the entire history of each reference is recorded.
Section~\ref{sec:memreclamation} \crremove{and Section~\ref{appsec:memreclamation} in the Supplemental Materials} shows how to preserve this invariant when they are reclaimed.

\vspace{-2pt}
\subsection{Bundles and Contains Operations}
\label{sec:contains}
Contains operations must observe the same history as range queries in order to guarantee linearizability. 
The correctness anomaly that would arise if contains operations were to execute entirely without bundles is equivalent to the one reported in~\cite{adya1999weak, kishi2019sss, peluso2015gmu}, in which two update operations operating on different elements are observed in different order by concurrent reads.

Our solution is to treat the contains operation as a single-key range query. 
There is no additional overhead during the \texttt{pre-range} phase because the traversal is uninstrumented. 
Also, the \texttt{collect-range} phase will return the correct result since both the low and high keys are the same.
Like range queries, the \texttt{enter-range} phase preserves correctness because it enforces contains to be blocked if they reach a pending entry that is not finalized by a concurrent update.

One drawback of treating contains as a special case of range query is that reading \texttt{globalTs}
to set the linearization snapshot of a contains can increase contention on \texttt{globalTs}, significantly.
To eliminate this overhead, we observe that a contains operation \textit{does not} need to fix its linearization point by reading \texttt{globalTs}. Its linearization point can be delayed until dereferencing the last bundle in its traversal.
To implement that, contains calls \texttt{DereferenceBundle} with an infinitely large timestamp at Lines~\ref{line:enter-range} and~\ref{line:enter-range-while} instead of reading \texttt{globalTs} at Line~\ref{line:readglobalts}. 
In Section~\ref{sec:evaluation}, we assess the performance impact of this optimization.

\subsection{Correctness}

\crremove{Due to space constraints, we moved the detailed correctness proof to Section~\ref{appendix:correctness} of the Supplemental Materials, and we only include a brief intuition here. }
\cradd{The correctness proof can be found in the companion technical report~\cite{nelson2021bundling}, but we include a brief intuition here.}

The linearization point of an update operation is defined as the moment it increments \texttt{globalTs}; on the other hand, range queries are linearized when they read \texttt{globalTs} to set their snapshot. Considering that bundle entries are initialized in a pending state, this guarantees that range queries will observe all updates linearized before they read \texttt{globalTs}.

The linearization point of contains is less trivial. Briefly, it is dictated by the linearization point of the update operation that inserted the bundle entry corresponding to the last dereference made in the \texttt{enter-range} phase.
If this update increments \texttt{globalTS} concurrently with the contains, the contains is linearized immediately after the increment; otherwise, the contains is linearized at its start.

\cradd{\subsection{Applying Bundling}
Designing a bundled data structure is not meant to be an automated process; it requires selecting a concurrent linked data structure and engineering the deployment of the bundle abstraction.
That said, all the bundled data structures in this paper share common design principles.
First, they are all linked data structures and have a sentinel node to start from. 

Second, they are lock-based and all locks are acquired before modifications take place, and released after. 
This is better known as two-phase locking (2PL)~\cite{bernstein1983multiversion}.
Because the data structures follow 2PL, all bundle entries are finalized atomically for a given update operation, which makes reasoning about the correctness of the resulting bundled data structures easier.
In turn, this allows us to illustrate our technique with varying degrees of complexity (e.g., multiple children in the case of the binary search tree).
Note that this does not preclude other locking strategies, but care should be taken to ensure that bundles are updated appropriately.} 

\cradd{Finally, all of the data structures we apply bundling to implement non-blocking traversals that avoid checking locks. Although bundling makes reads blocking, it retains the optimized traversals during the \texttt{pre-range} phase and only synchronizes with updates (via the bundles) during the phases \texttt{enter-range} and \texttt{collect-range}. This is possible because we augment the existing links instead of replacing them.}

\section{Bundled Data Structures}


We now describe how to apply bundling to the highly-concurrent lazy sorted linked list~\cite{lazylist}. We also apply bundling to more practical data structures, such as the lazy skip list~\cite{lazyskiplist} and the Citrus unbalanced binary search tree~\cite{citrus}. 
\crremove{Due to space constraints, we only detail linked list in this section since it simplifies the presentation of bundling's application, and highlights the important aspects of the process.
We provide an overview here but include a closer discussion about bundled skip list and Citrus tree in Section~\ref{Appendix:other-ds} of the Supplemental Materials. }


\subsection{Bundled Linked List}
\label{sec:lazylist}

Listing~\ref{lst:lazylistnode} provides a full definition of member variables of the linked list node.
Note that the only additional field compared to the original algorithm is the bundle. 

\begin{algorithm}
\begin{scriptsize}
\KwIn{key, val}
\Begin{
    \While{true}{
       $pred, curr \leftarrow$ Traverse($key$) \\
       Lock($pred$) \\
       \label{line:lazylistinsertlock}
       \If{ValidateLinks($pred$, $curr$)}{
            \If{$curr.key == key$}{
                \Return false \\
            }
            $newNode \leftarrow$ \textbf{new} Node($key$, $val$, $curr$) \\
            {\color{blue}$bundles \leftarrow$ ($newNode.bundle$, $pred.bundle$) \\
            \label{line:bundles}
            $ptrs \leftarrow$ ($curr$, $newNode$) \\
            $ts \gets $PrepareBundles($bundles$, $ptrs$) \\
            \label{line:lazylistprepare}}
            $pred.next \gets newNode$\\ 
            \label{line:lazylistcritical}
            {\color{blue}FinalizeBundles($bundles$, $ts$)\\
            \label{line:lazylistfinalize}}
            Unlock($pred$) \\
            \Return true
       }
       Unlock($pred$) \\
   }
}
\caption{Insert operation of Bundled Linked List}
\label{algo:lazylistinsert}
\end{scriptsize}
\end{algorithm}


\underline{Insert Operation}. Initially, insert operations (Algorithm~\ref{algo:lazylistinsert}) traverse the data structure to determine where the new node must be added.
After locking the predecessor, both current and predecessor nodes are validated by checking that they are not logically deleted and that no node was inserted between them.
If validation succeeds and the key does not already exist, then a new node with its next pointer set to the successor is created. Otherwise, the nodes are unlocked and the operation restarts.

The above follow the same procedure that the original linked list (without bundling) would use.
\cradd{To illustrate how bundling applies, we highlight the bundling-specific lines blue.}
In case of successful validation, the next thing is to perform the three steps described in Section~\ref{sec:updates} to linearize an update operation in a bundled data structure: installing pending bundle entries and incrementing the global timestamp (Line~\ref{line:lazylistprepare}), performing the original critical section (Line~\ref{line:lazylistcritical}), and finalizing the bundles (Line~\ref{line:lazylistfinalize}).
For an insertion, the bundles of the newly added node and its predecessor must be modified to reflect their new values and the timestamp of the operation.
Then, locks are released.

Note that in Algorithm~\ref{algo:lazylistinsert} we employ an optimization where only the predecessor is locked by insert operations (Line~\ref{line:lazylistinsertlock}). In~\cite{lazylist}, it is proven that this optimization preserves linearizability.
However, it also reveals a subtle but important corner case that motivates the need for updates to wait for pending bundles to be finalized
(Line~\ref{line:updatewaitpendingbundle} of Algorithm~\ref{algo:prepare}).
Because the current node is not locked, it is possible that a concurrent update operation successfully locks the new node after it is reachable and before its bundles are finalized by the inserting operation.
\crremove{This nefarious case is protected by first waiting for the ongoing insertion to finish to ensure the bundle remains ordered (see Section~\ref{sec:updates}) or by holding a lock on the new node until it is inserted.}
\cradd{
This nefarious case is protected by first waiting for the ongoing insertion to finish to ensure the bundle remains ordered (see Section~\ref{sec:updates}).}

\cradd{
In Algorithm~\ref{algo:lazylistinsert}, the \texttt{PrepareBundles} step is invoked only after the locks are held.
Although this nesting synchronization might look redundant, it is important to guarantee safety in the aforementioned case.
Fusing the two is not out of the question, but would require careful consideration regarding both updates and range queries.
Informally, a range query could first read the timestamp then check the lock, waiting until it is released by the update, akin to the pending entry in the original algorithm. 
We mention this point to illustrate that bundling is adaptable, but we do not include a design since it tightly couples our approach to the underlying data structure and lengthens the critical section during which a read may have to wait.
}

\underline{Remove operations}. Remove operations follow a similar pattern by first traversing to the appropriate location, locking the nodes of interest, validating them (restarting if validation fails), marking the target node as deleted, unlinking it if its key matches the target key, then finally unlocking the nodes and returning.
Here, the original critical section includes the logical deletion and the unlinking of the deleted node, which is therefore surrounded with the required bundle maintenance by performing the operation via calls to \texttt{PrepareBundles} and \texttt{FinalizeBundles}.

Importantly, range query operations do not need to check if a node is logically deleted because they are linearized upon reading the global timestamp and all bundle entries point to nodes that existed in the data structure at the corresponding timestamp.
The removal of a node is observed through the finalization of the predecessor's bundle if their linearization point occurs after the remove operation increments \texttt{globalTs}.
If the range query's linearization point falls before the remove, then the newly inserted entry in the predecessor's bundle will not satisfy the observed \texttt{globalTs} value and an older entry will be traversed instead.

Similarly, contains do not check for logical deletions because they are linearized according to the last entry traversed during the \texttt{enter-range} phase and will be linearized after a node is physically unlinked, or before it is logically deleted.

\crremove{
Note that a removed node's bundle will not change because its \texttt{ptr} value reflects the physical state of the data structure immediately before the removal takes place.
However, the logical deletion is still required by update operations to validate that no concurrent operation removes the predecessor node.}


\cradd{
The removed node's bundle is also updated and points to the head of the list. 
This is to protect against a rare case resulting from the \texttt{pre-range} phase of a concurrent range query returning the removed node. In this case, if the range query reads \texttt{globalTs} (i.e., is linearized) after a sequence of deletions of all keys starting from this node through some nodes in the range, the range query will mistakenly include the deleted nodes.
Adding a bundle entry in the removed node solves this problem by redirecting the range query to the head node and allowing the \texttt{enter-range} phase to traverse from the beginning using the bundle entries matching the \texttt{globalTs} value it reads.
As shown in Section~\ref{sec:evaluation}, this scenario rarely occurs and has no impact on performance.


}

\underline{Read operations}. We now analyze the functions required by contains and range queries: \texttt{GetNext}, \texttt{GetNextFromBundle}, and \texttt{CollectRange}.
To do so, we describe the execution of a range query and then discuss the changes necessary to support contains operations.

When executing a range query on the linked list, the \texttt{pre-range} phase consists of scanning from the head until the node immediately preceding the range (or target key) is found using \texttt{GetNext}, which simply returns \texttt{next}. 
The operation then traverses using the bundles until finding the first node in the range using \texttt{GetNextFromBundle}. This function uses a single call to \texttt{DereferenceBundle} to return the node that satisfies \texttt{ts} in the bundle of the current node.
While the current node $curr$ is in the range, \texttt{CollectRange} adds $curr$ to the result set then updates it to point to the next node in the snapshot calling \texttt{DereferenceBundle} on \texttt{bundle}.
Because of the construction of a linked list, the operation terminates once the range is exceeded.
Together, these functions are used by Algorithm~\ref{algo:rangequery}
to perform linearizable read operations.

Recall from Section~\ref{sec:contains} that contains operations follow the same general strategy as range queries, but use a infinitely large timestamp to traverse the newest bundle entry at each node.
Hence, the value stored at Line~\ref{line:readglobalts} of Algorithm~\ref{algo:rangequery} can be replaced with the maximum timestamp.

\subsection{Bundled Skiplist}
\label{sec:skiplist}

The second data structure where we apply bundling is the lazy skip list~\cite{lazyskiplist}, whose design is similar to the bundled linked list.
In the following, we highlight the differences between the two designs. 

The first difference is that skip list consists of a bottom \textit{data layer} where data resides, and a set of \textit{index layers} to accelerate traversal. 
Hence, given a target key, a regular traversal returns a set of (\texttt{pred}, \texttt{curr}) pairs at both the index and data layers.
If the target key exists in the data structure, then it also returns the highest level (\texttt{levelFound}) at which the node was found. 
A naive approach to bundling this skip list would be to replace all links with bundled references, including the index layers.
However, recall that for read operations the use of bundles is delayed to improve performance. 
Therefore it is sufficient to only bundle references at the data layer, leaving the index layers as is.


Second, because update operations manipulate multiple links per node, they are linearized using logical flags.
Specifically, insert operations set a \texttt{fullyLinked} flag in the new node after the links of all its \texttt{pred} nodes are updated to point to it. 
Setting this flag is the linearization point of insert operations in the original lazy skip list. 
As required by bundling, the original critical section that sets the predecessors' links and marks the node as logically inserted is book-ended by the preparation (Algorithm~\ref{algo:prepare}) and finalization (Algorithm~\ref{algo:finalize}) of the bundles for the predecessor and the new node, similarly to the bundled linked list.


Similar to inserts use of \texttt{fullyLinked}, remove operations are linearized in the original lazy skip list by a logical deletion, handled in the bundled skip list as follows.
First, the \cradd{bundles of the predecessor and the removed node} are prepared using Algorithm~\ref{algo:prepare}.
\cradd{Similar to the lazy list, the removed node's bundle will point to the head of the list.}
Next, the critical section marks the node as logically deleted then updates the references of the predecessor nodes in the index and data layers, to physically unlink the node.
Finally, \cradd{the bundles are finalized} with Algorithm~\ref{algo:finalize}, allowing read operations to observe the change.



Finally, to support linearizable read operations, the skip list defines the required functions as follows.
\texttt{GetNext} leverages the index layers to find the node whose next node at the data layer is greater than or equal to the target key by examining.
Similar to the bundled linked list, \texttt{GetNextFromBundle} returns the bundle entry satisfying $ts$, but this will only traverse the data layer.
\texttt{CollectRange} then scans the data layer, using bundles, to collects the range query's result set.



\subsection{Bundled Binary Search Tree}
\label{sec:citrus}

For our bundled tree, we reference the Citrus unbalanced binary search tree~\cite{citrus}, which leverages RCU and lazy fine-grained locking to synchronize update operations while supporting optimistic traversals.
We modify it by replacing each child link of the search tree with a bundled reference.


Citrus implements a traversal enclosed in a critical section protected by RCU's read lock.
This protects concurrent updates from overwriting nodes required by the traversal.
After the traversal, if the current node matches the target, the operation returns a reference to it (named \texttt{curr}), a reference to 
its parent (named \texttt{pred})
and the direction of the child from \texttt{pred}.
Otherwise, the node is not found and the return value of \texttt{curr} is null.

Because the Citrus tree is unbalanced, insertions are straightforward and always insert a leaf node. 
Otherwise adhering to the original tree algorithm, insertions are linearized by first preparing the bundle of the new node and of \texttt{pred}, corresponding to the child updated and incrementing the global timestamp with \texttt{PrepareBundles}, then setting the appropriate child, and finalizing the bundles by calling \texttt{FinalizeBundles}.
Lastly, the insert operation unlocks \texttt{pred} and returns.

The more interesting case is a remove operation, which should address three potential situations, assuming that the target node \texttt{curr} is found and will be removed.
In the first case, \texttt{curr} has no children.
To remove \texttt{curr} the child of \texttt{pred} that pointed to \texttt{curr} is updated along with its bundle.
In the second case, \texttt{curr} has exactly one child.
In this scenario, the only child of \texttt{curr} replaces \texttt{curr} as the child of \texttt{pred}.
Again, the bundle corresponding to \texttt{pred}'s child is also updated accordingly.
The last, and more subtle, case is when \texttt{curr} has two children, in which we must replace the removed node with its successor (the left-most node in its right subtree).
\cradd{In all cases, both of the removed node's bundles are updated to point to the root to avoid the divergent path scenario described in Section~\ref{sec:lazylist}.}

In this last case,
both \texttt{curr}'s successor and its parent are locked. 
Then, following RCU's methodology, a copy of the successor node is created and initialized in a locked state with its children set to \texttt{curr}'s children.
The effect of this behavior
is that possibly six bundles must be modified during \texttt{PrepareBundles} and \texttt{FinalizeBundles} to reflect the new physical state after the operation takes effect.
\cradd{The required changes are: 1) \texttt{pred}'s left or right bundle is modified with a reference to the copy of \texttt{curr}'s successor, 2) both bundles in the copy of \texttt{curr}'s successor must point to the respective children of \texttt{curr}, and 3) both of \texttt{curr}'s bundles must lead to the root of the tree to avoid the divergent path scenario described in Section~\ref{sec:lazylist}}.
Optionally, if the parent of \texttt{curr}'s successor is not \texttt{curr} itself, then this node's bundle is updated to point to the successor's right-hand branch, since the successor is being moved.
In all cases, the remove operation is linearized at the moment the child in \texttt{pred} is changed, making the update visible.


Read operations differ slightly from the bundled linked list and skip list implementations. For trees, unlike lists, the node found during the \texttt{pre-range} phase is not necessarily a node whose key is lower than the lower bound of the range.
Instead, it is the first node discovered through a traversal whose child is in the range.
Its child is the root of the sub-tree that includes all nodes belonging to the range.
More concretely, \texttt{GetNext} examines the current node and either follow the left or right child depending on whether the node is greater than or less than the range, respectively.
Similar to before, the node reached by the optimistic traversal of the \texttt{pre-range} phase may not be the correct entry point to the range, and subsequent \texttt{enter-range} phase using bundles is needed.
\texttt{GetNextFromBundle} parallels \texttt{GetNext}, but uses bundles instead of child pointers.


As in~\cite{ebr-rq}, \texttt{CollectRange} follows a depth-first traversal using bundles.
A stack is used to keep the set of to-be-visited-nodes to help traverse the subtree rooted at the node returned by the \texttt{enter-range} phase.
\texttt{CollectRange} initializes the stack with the first node reached in the range.
It then iteratively pops a node from the stack and checks whether its key is lower than, within, or greater than the range.
Next, it adds the node's corresponding children to the stack according to this check.
If the node is within the range it adds the node to the result set.
Finally, it pops the next node from the stack and performs the above procedure again.
Note that, as previously discussed, contains will follow the same general pattern as range queries but will use an infinitely large timestamp to traverse the newest entry at each link.

\section{Memory Reclamation}
\label{sec:memreclamation}

We rely on epoch-base memory reclamation (EBR)~\cite{ebr} to cleanup physically removed nodes and no longer needed bundle entries because, as already assessed by~\cite{ebr-rq}, quiescent state memory reclamation~\cite{rcu} (a generalized form of EBR) mirrors the need for a range query to observe a snapshot of the data structure.
To avoid overrunning memory with bundles, we employ a background cleanup mechanism to recycle outdated bundle entries.
This is accomplished by tracking the oldest active range query and only retiring bundle entries that are no longer needed by any active operation.


\textbf{EBR Overview}
EBR guarantees that unreachable objects are freed by maintaining a collection of references to recently retired objects and an epoch counter to identify. It operates under the assumption that object are not accessed outside of the scope of an operation (i.e., during quiescence). EBR monitors the epoch observed by each thread and the objects retired during each epoch. The epoch is only incremented after all active threads have announced that they have observed the current epoch value. When a new epoch is started, any objects retired two epochs prior can be safely freed.
We use a variant of EBR, called DEBRA~\cite{debra}, that stores per-thread limbo lists which also reduces contention on shared resources by recording removed nodes locally for each thread.

\textbf{Freeing Data Structure Nodes.} 
EBR guarantees that no node is freed while concurrent range queries (as well as any concurrent primitive operation) may access it; 
and, bundling guarantees that no range query that starts after physically removing a node will traverse to this node.
As an example, consider the two following operations: \textit{i}) a range query, $R$, whose range includes node $x$; and \textit{ii}) a removal operation, $U_t$, which is linearized at time $t$ and removes $x$.
If $R$ is concurrent with $U_t$, then EBR will guarantee that $x$ is not freed since $R$ was not in a quiescent state and a grace period has not passed.
In this case, $R$ may safely traverse to $x$ based on its observed timestamp, without concern that the node may be freed.
On the other hand, if $R$ starts after $U_t$, then trivially $x$ will never be referenced by $R$ and is safe to be reclaimed since $R$ observed a timestamp greater than or equal to $t$.

\textbf{Freeing Bundle Entries.} Bundle entries are reclaimed in two cases. The first, trivial, case is that bundle entries are reclaimed when a node is reclaimed. The second case is more subtle. After a node is freed, there may still exist references to it (in other nodes' bundles) that are no longer necessary and should be freed.
Bundle entries that have a timestamp older than the oldest active range query can be reclaimed only if there also exists a more recent bundle entry that satisfies the oldest range query.
This cleanup process may be performed during operations themselves or, as we implement, delegated to a background thread.

To keep track of active range queries, we augment the global metadata with \texttt{activeRqTsArray}, which is an array of timestamps that maintains their respective linearization timestamp. 
During cleanup, this array is scanned and the oldest timestamp is used to retire outdated bundle entries. Once a node is retired, EBR becomes responsible for reclaiming its memory in the proper epoch.

Reading the global timestamp and setting the corresponding slot in \texttt{activeRqTsArray} must happen atomically to ensure that a snapshot of the array does not miss a range query that has read the global timestamp but not yet announced its value.
This is achieved by first setting the slot to a pending state, similar to they way we protect bundle entries, which blocks the cleanup procedure until the range query announces its linearization timestamp.
Second, the cleanup thread has to be protected by EBR as well. 

Since contains operations use the most recent bundle entry at each node, they do not need to announce themselves in \texttt{activeRqTsArray} since we guarantee that the background cleanup thread never retires the newest node. Note that a contains operation, as well as all other operations, still announces the EBR epoch it observes when it starts as usual. Accordingly, even if the entry it holds becomes no longer the most recent entry and is retired, it will not be reclaimed until the contains operation finishes.

\section{Evaluation}
\label{sec:evaluation}

\crremove{To evaluate our design, we compare} \cradd{We evaluate} bundling \cradd{by integrating our design into an existing benchmark~\cite{ebr-rq} and comparing} with the following competitors: \textit{EBR-RQ}~\cite{ebr-rq} is the lock-free variant of Arbel-Raviv and Brown's technique based on epoch-based reclamation; \textit{RLU}~\cite{rlu} implements an RCU-like synchronization mechanism for concurrent writers; \textit{vCAS}~\cite{vcas-ppopp21} is our porting of the target data structures to use vCAS objects; \textit{Unsafe} is an implementation of each data structure for which range queries are uninstrumented (i.e., non-linearizable).

It should be noted that although the infrastructure supporting range queries in EBR-RQ is lock-free, the data structures still require locking.
We ignore the locking variant of EBR-RQ because it consistently performs worse at all configurations.
For the same reason, we do not include Snapcollector~\cite{snapcollector} in our plots.
Also, RLU is not included in the results for skip list because no implementation is available.

Since no procedural conversion for lock-based data structures to a vCAS implementation exists, we follow their guideline by replacing with a vCAS object every pointer and metadata necessary for linearization, except for the locks.
During updates, a new version is installed in the vCAS object whenever the field changes by using the vCAS API.
Since the mutable fields are vCAS objects and all accesses to those fields are instrumented, operations are linearizable.
\cradd{Although it is not clear whether this porting can be generalized to any lock-based data structure, we believe it is sufficient to contrast performance with our approach.}

In all experiments, we added another version of Bundle, named \textit{Bundle-RQ}, which moves the responsibility of updating \texttt{globalTs} to range queries (similar to~\cite{vcas-ppopp21} and~\cite{ebr-rq}). A general observation is that there is no significant difference in performance between Bundle and Bundle-RQ throughout all experiments. 
\cradd{Incrementing the timestamp on updates has as good or better performance, except in the 100\% update workload where it is slightly worse. As a result, our decision to increment the timestamp on updates is not a dominating factor in performance. The timestamp itself does pose a scalability bottleneck, but this is observed across all competitors since they all use logical timestamps to represent versions.}
Because of this, we avoid discussing the details of this version for brevity and we focus on commenting Bundle's performance in the rest of the evaluation.
We also evaluated a version in which the contains operation reads \texttt{globalTs} instead of calling \texttt{DereferenceBundle} with an infinitely large timestamp. However, this version was constantly 10\%-20\% worse in light update workloads than our optimized Bundle version, due to the increasing cache contention on \texttt{globalTs}. Accordingly, we excluded it from the figures for clarity.

\cradd{Finally, recollect that removals in our bundled data structures insert entries in the bundles of removed nodes pointing to the head (or root for the Citrus tree). Our experiments reveal no impact on the overall performance since traversals of this kind occur at most four times for every 1 million operations, across all data structures and workloads.}

All code is written in C++ and compiled with \texttt{-std=c++11 -O3 -mcx16}.
Tests are performed on a machine with four Xeon Platinum 8160 processors, with a total of 192 hyper-threaded cores \cradd{and a 138 MB L3 cache}, running Ubuntu 20.04.
All competitors leverage epoch-based memory reclamation to manage memory.
EBR-RQ accesses the epoch-based memory reclamation internals to provide linearizable range queries.
All other competitors have strategies that are orthogonal to how memory is managed.
While memory is reclaimed, each approach avoids freeing memory to the operating system to eliminate performance dependence on the implementation of the \texttt{free} system call.
The results for bundling include the overhead for tracking the oldest active range query for cleaning up outdated bundle entries.

\subsection{Bundled Data Structure Performance}
\label{sec:microbench}
In each of the following experiments, threads execute a given mix of update, contains, and range query operations, with target keys procured uniformly. 
The data structure is initialized with half of the keys in the key range; all updates are evenly split between inserts and removes stabilizing the data structure size at half-full.
Workloads are reported as $U-C-RQ$, where $U$ is the percentage of updates, $C$ is the percentage of contains and $RQ$ is the percentage of range queries.
All reported results are an average of three runs of three seconds each, except where noted.
The key range of each data structure is as follows: the lazy list is 10,000 and the skip list and Citrus tree are both 1,000,000. \cradd{Except where noted, ranges are 50 keys long with starting points uniformly generated from the set of possible keys.}

\begin{figure}[h]
\centering
\begin{subfigure}{\linewidth}
        \centering
        \includegraphics[width=\textwidth]{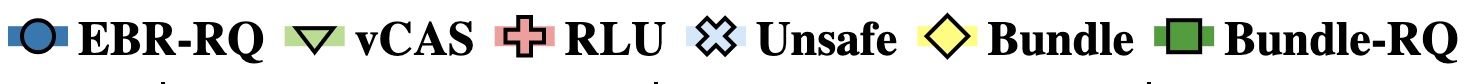}
\end{subfigure}
\begin{subfigure}{.45\linewidth}
    \centering
    \includegraphics[width=\textwidth]{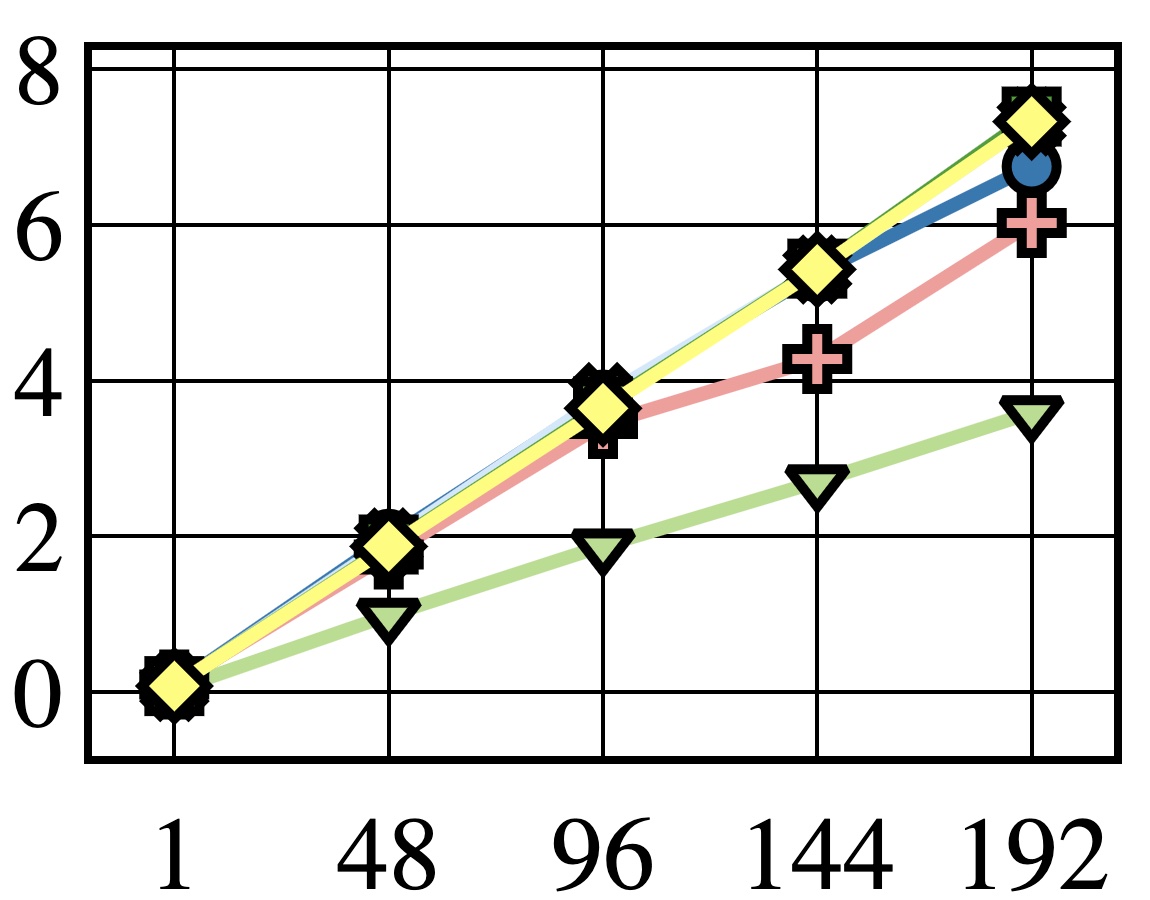}
\end{subfigure}
\begin{subfigure}{.45\linewidth}
    \centering
    \includegraphics[width=\textwidth]{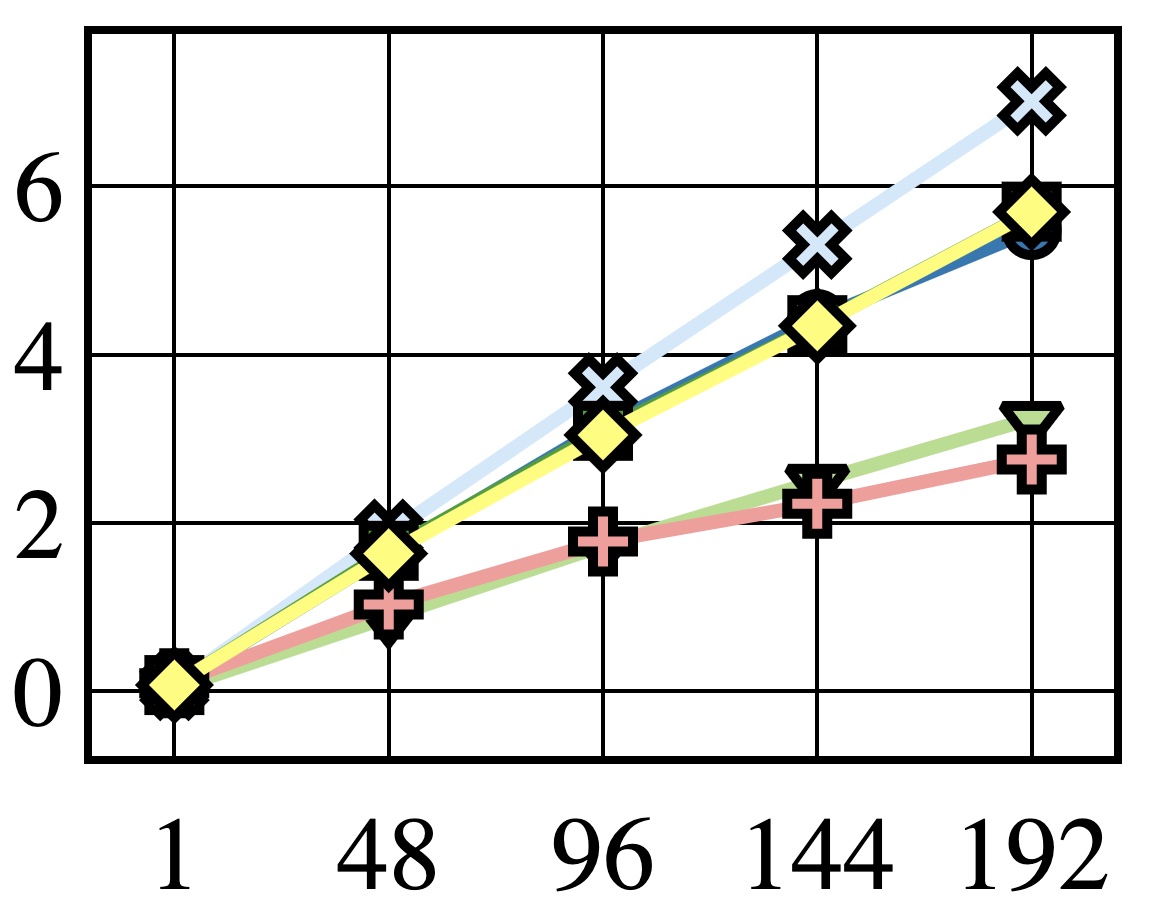}
\end{subfigure}
    \caption{Throughput (Mops/s) of a lazy list for  10-80-10 and 90-0-10 workloads \cradd{while varying the number of threads.}}
    \label{fig:lazylist}
\end{figure}


\begin{figure*}[t]
    \centering
    \begin{subfigure}{\textwidth}
        \centering
        \includegraphics[width=.6\textwidth]{figures/legend-microbench.jpg}
    \end{subfigure}\\
    \hspace{-15pt}
    \begin{subfigure}{0.025\textwidth}
        \includegraphics[width=1.4\textwidth]{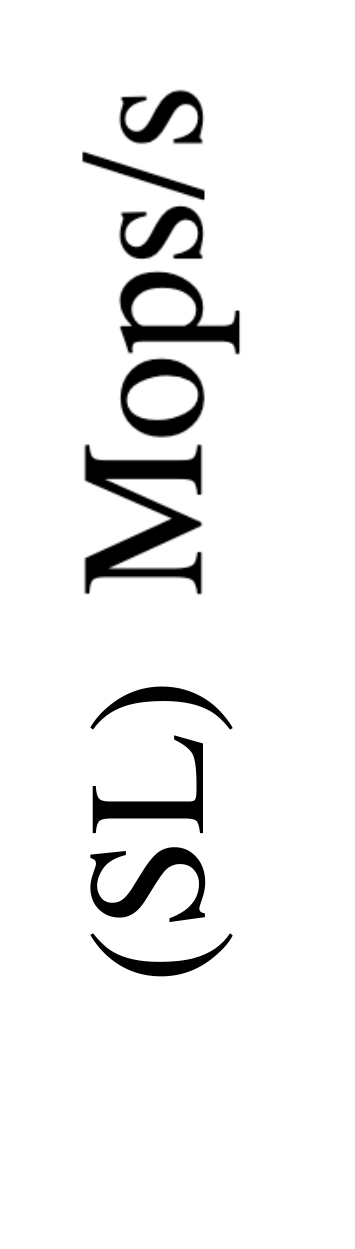}
    \end{subfigure}
    \hspace{-6pt}
    \begin{subfigure}{0.17\textwidth}
        \includegraphics[width=\textwidth]{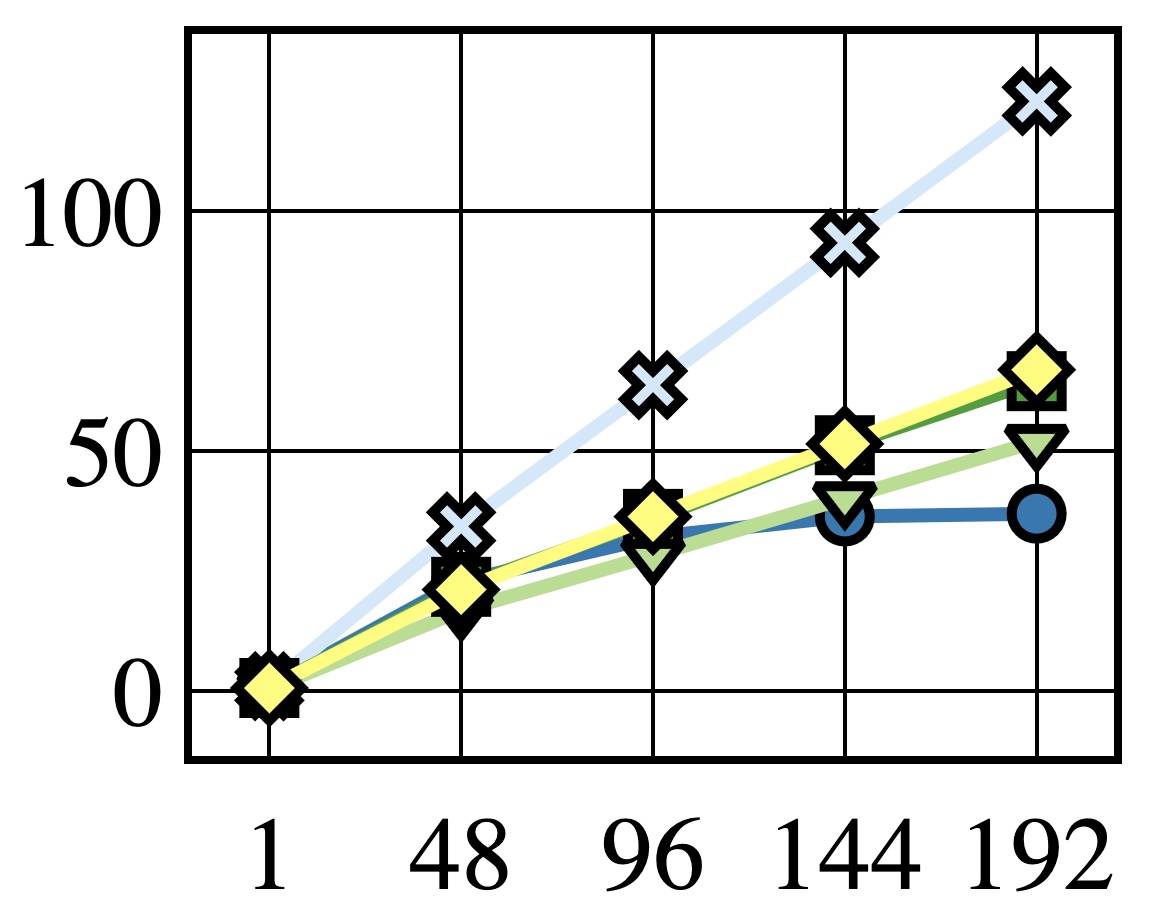}
        \caption{$2-88-10$}
        \label{fig2:sl-2-88-10}
    \end{subfigure}
    \hspace{-7pt}
    \begin{subfigure}{0.17\textwidth}
        \includegraphics[width=\textwidth]{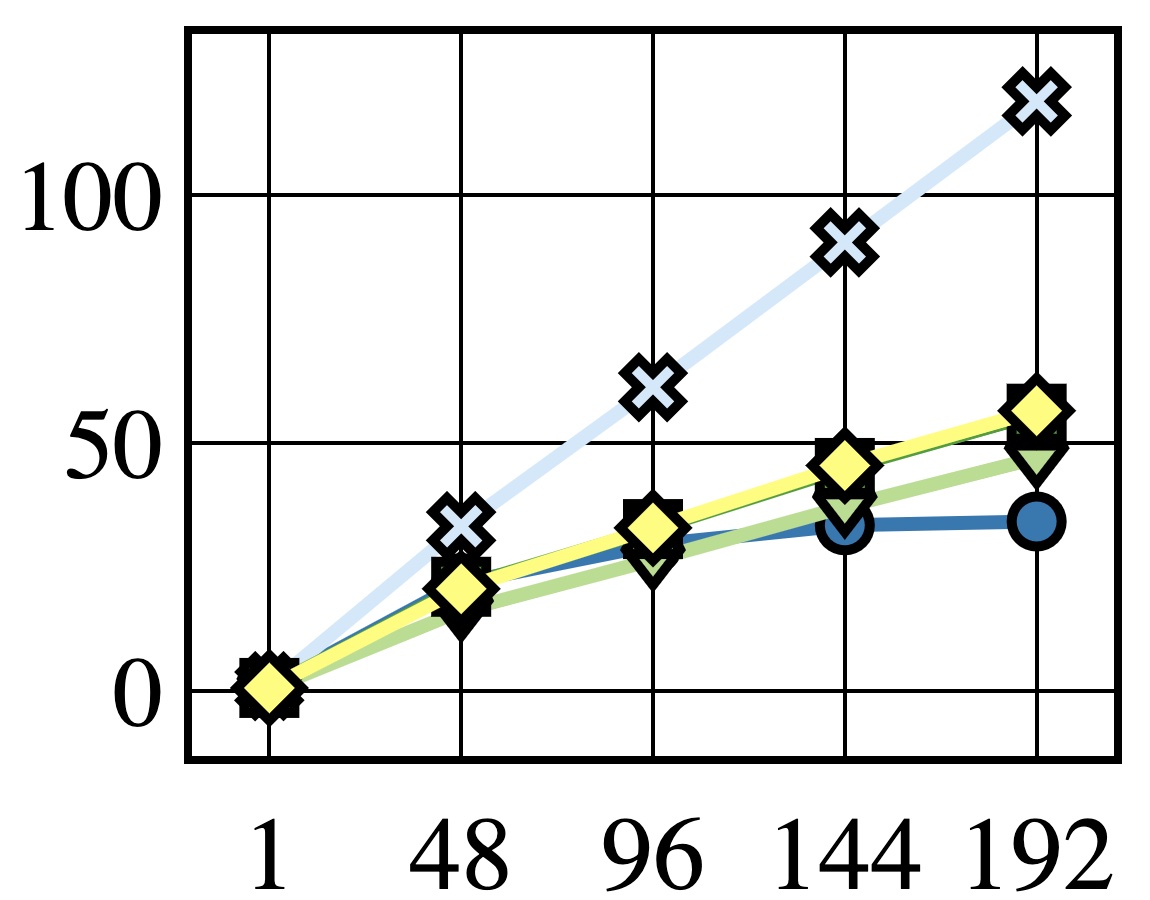}
        \caption{$10-80-10$}
        \label{fig2:sl-10-80-10}
    \end{subfigure}
    \hspace{-7pt}
    \begin{subfigure}{0.17\textwidth}
        \includegraphics[width=\textwidth]{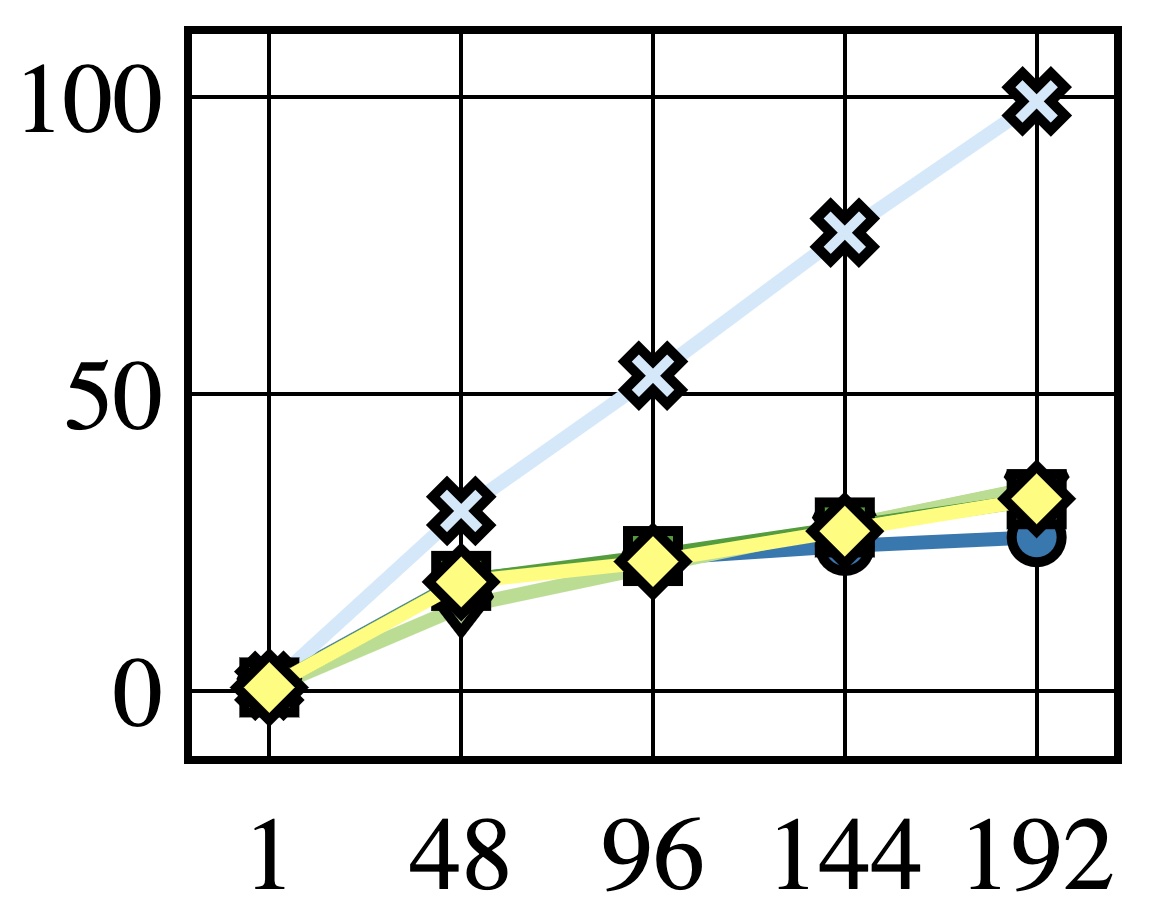}
        \caption{$50-40-10$}
        \label{fig2:sl-50-40-10}
    \end{subfigure}
    \hspace{-7pt}
    \begin{subfigure}{0.17\textwidth}
        \includegraphics[width=\textwidth]{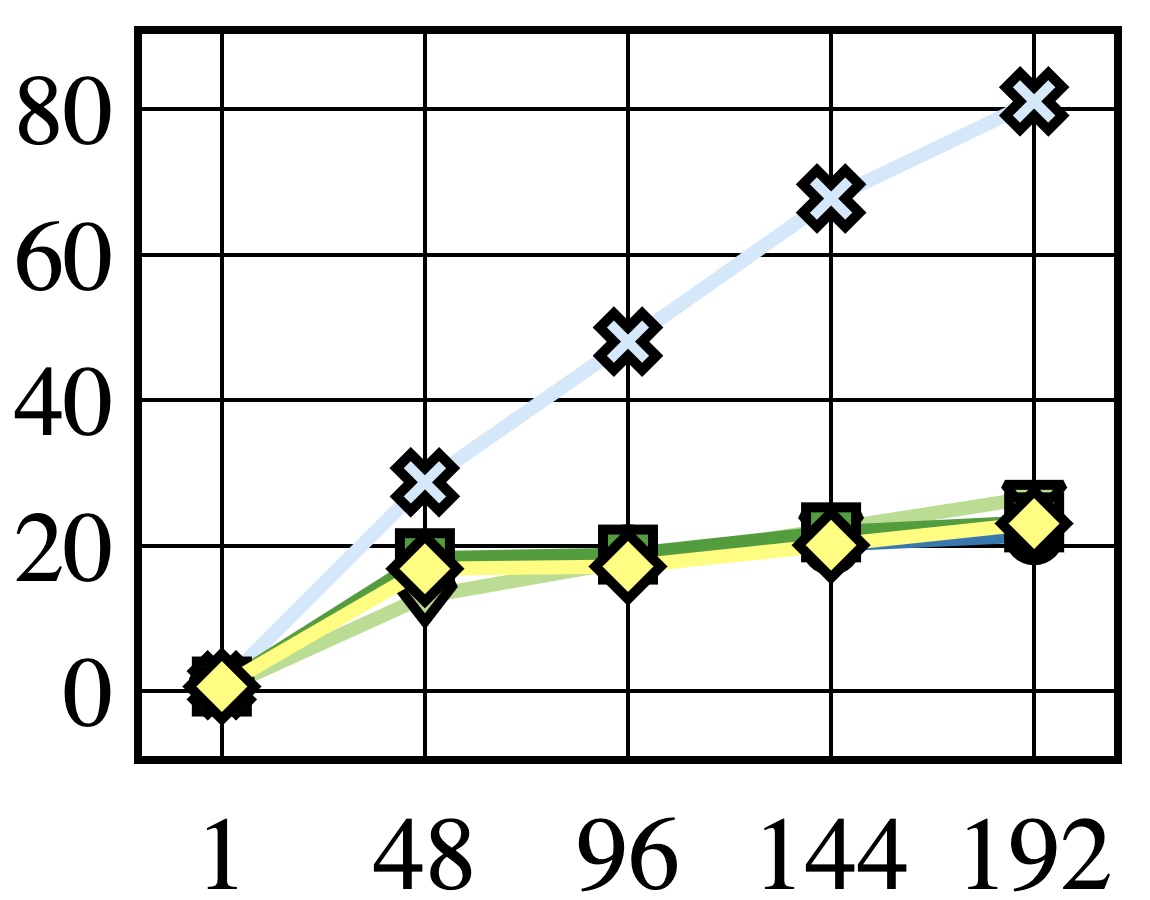}
        \caption{$90-0-10$}
        \label{fig2:sl-90-0-10}
    \end{subfigure}
    \hspace{-7pt}
    \begin{subfigure}{0.17\textwidth}
        \includegraphics[width=\textwidth]{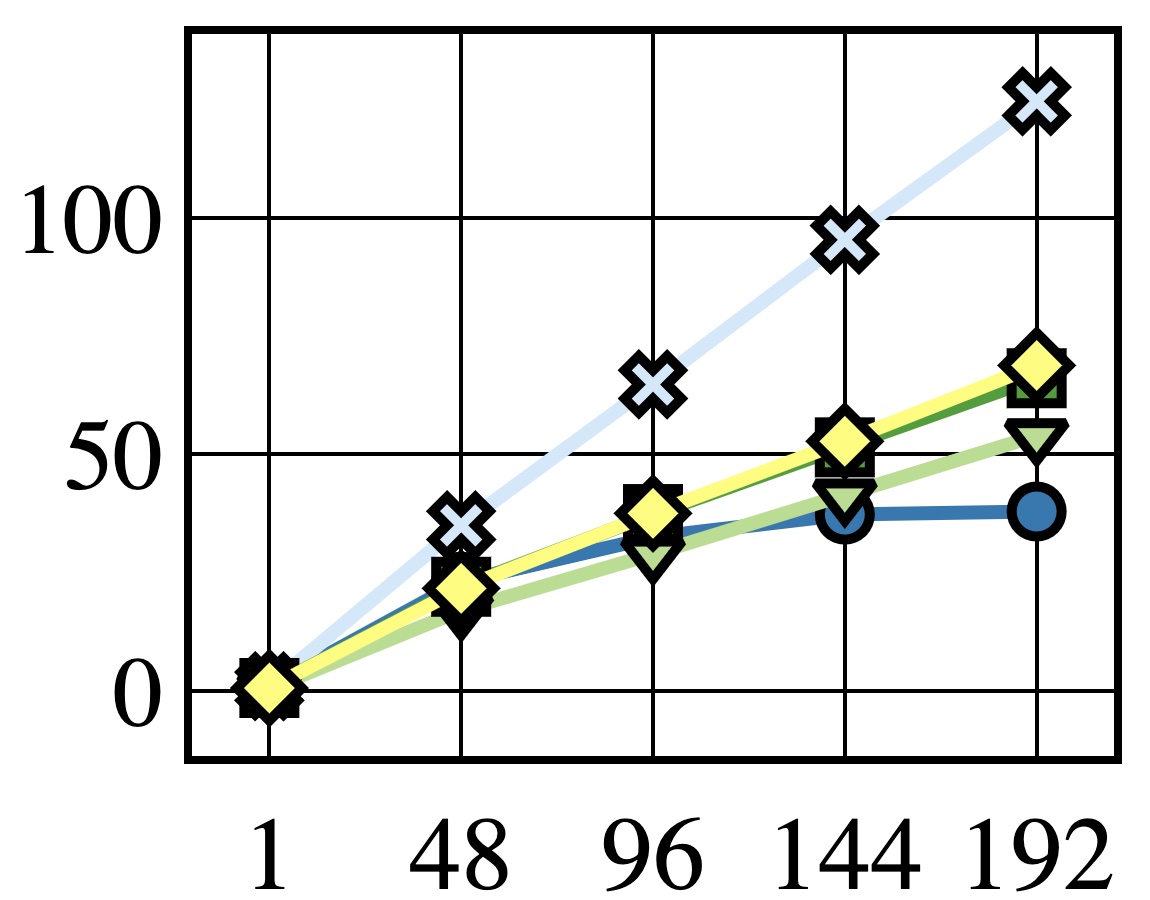}
        \caption{$0-90-10$}
        \label{fig2:sl-0-90-10}
    \end{subfigure}
    \hspace{-7pt}
    \begin{subfigure}{0.17\textwidth}
        \includegraphics[width=\textwidth]{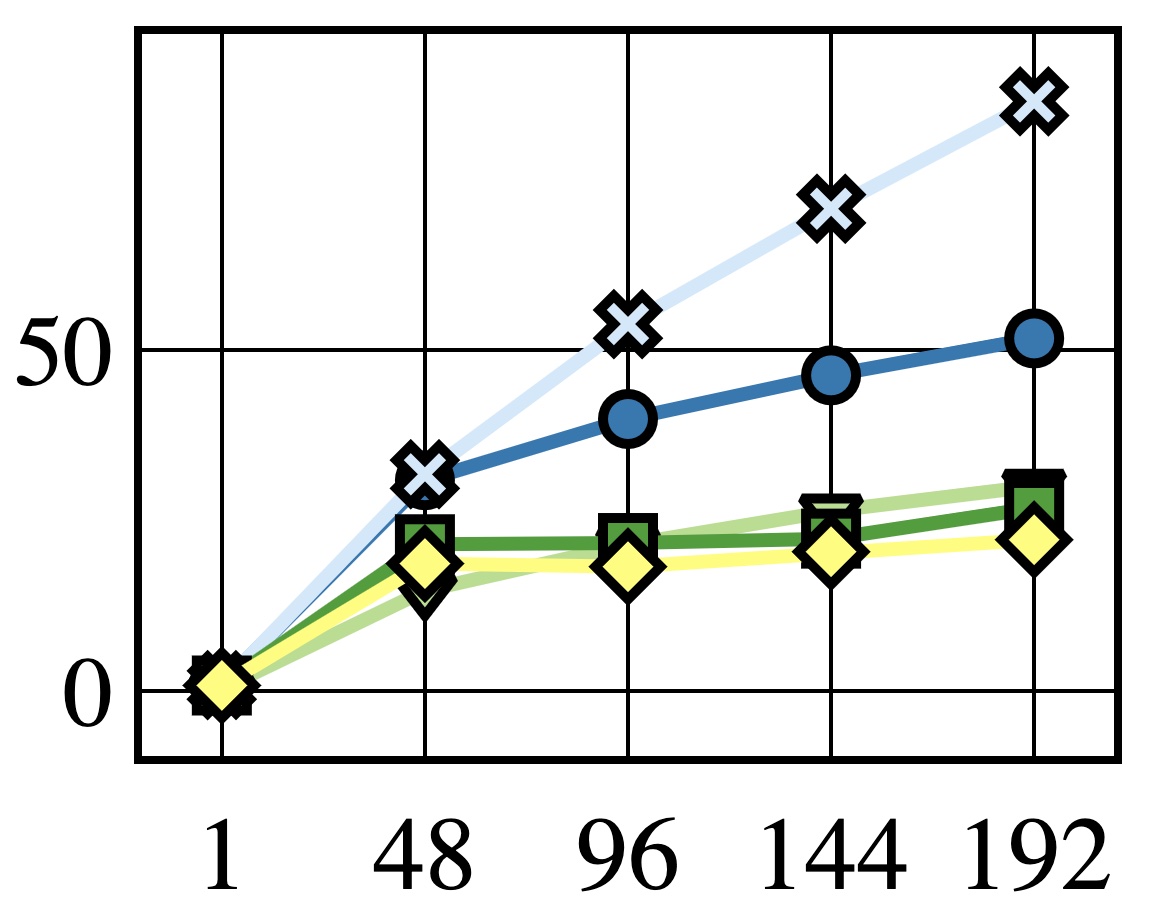}
        \caption{$100-0-0$}
        \label{fig2:sl-100-0-0}
    \end{subfigure}\\
    \hspace{-15pt}
    \begin{subfigure}{0.025\textwidth}
        \vspace{-15pt}
        \includegraphics[width=1.4\textwidth]{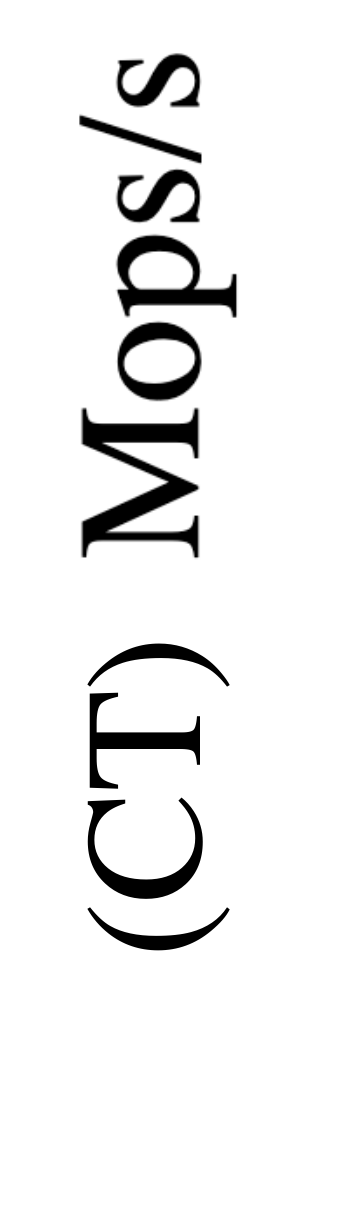}
    \end{subfigure}
    \hspace{-6pt}
    \begin{subfigure}{0.17\textwidth}
        \includegraphics[width=\textwidth]{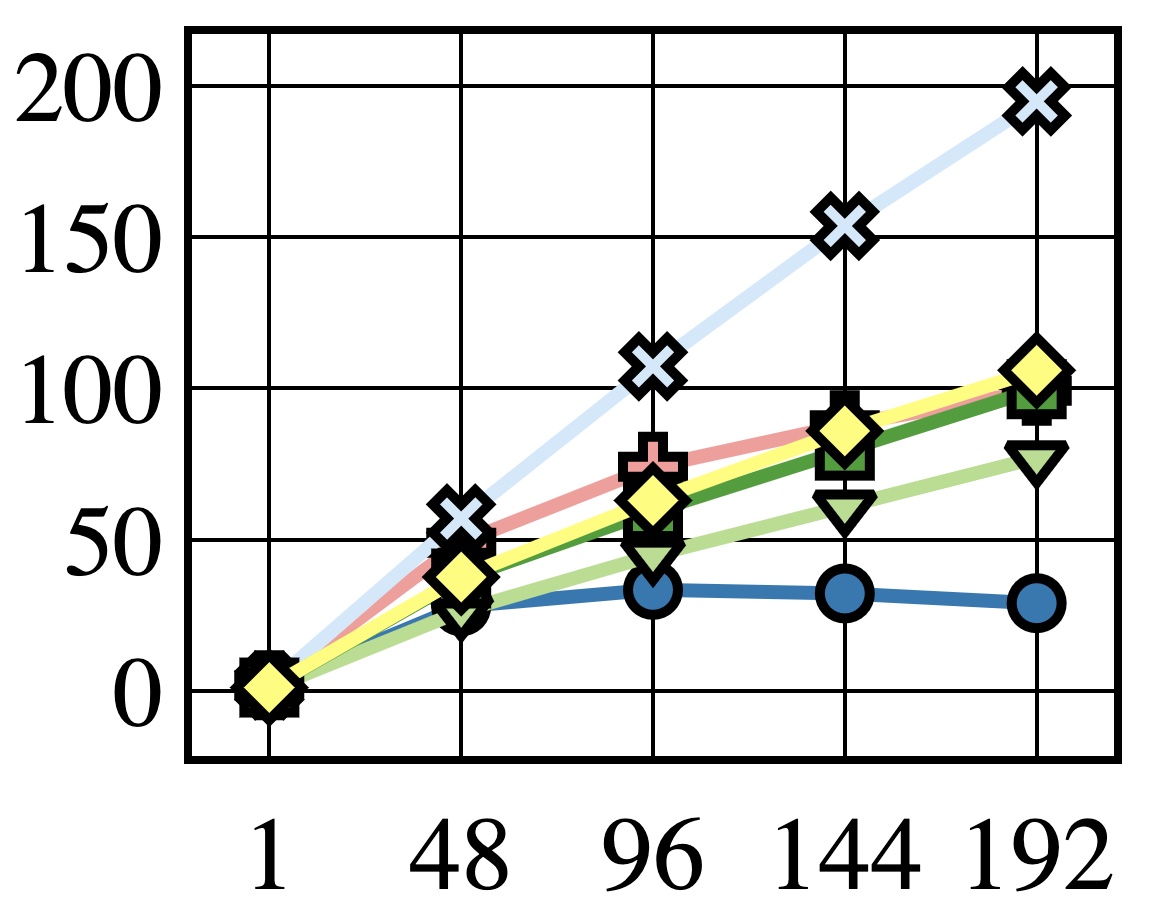}
        \caption{$2-88-10$}
        \label{fig2:ct-2-88-10}
    \end{subfigure}
    \hspace{-7pt}
    \begin{subfigure}{0.17\textwidth}
        \includegraphics[width=\textwidth]{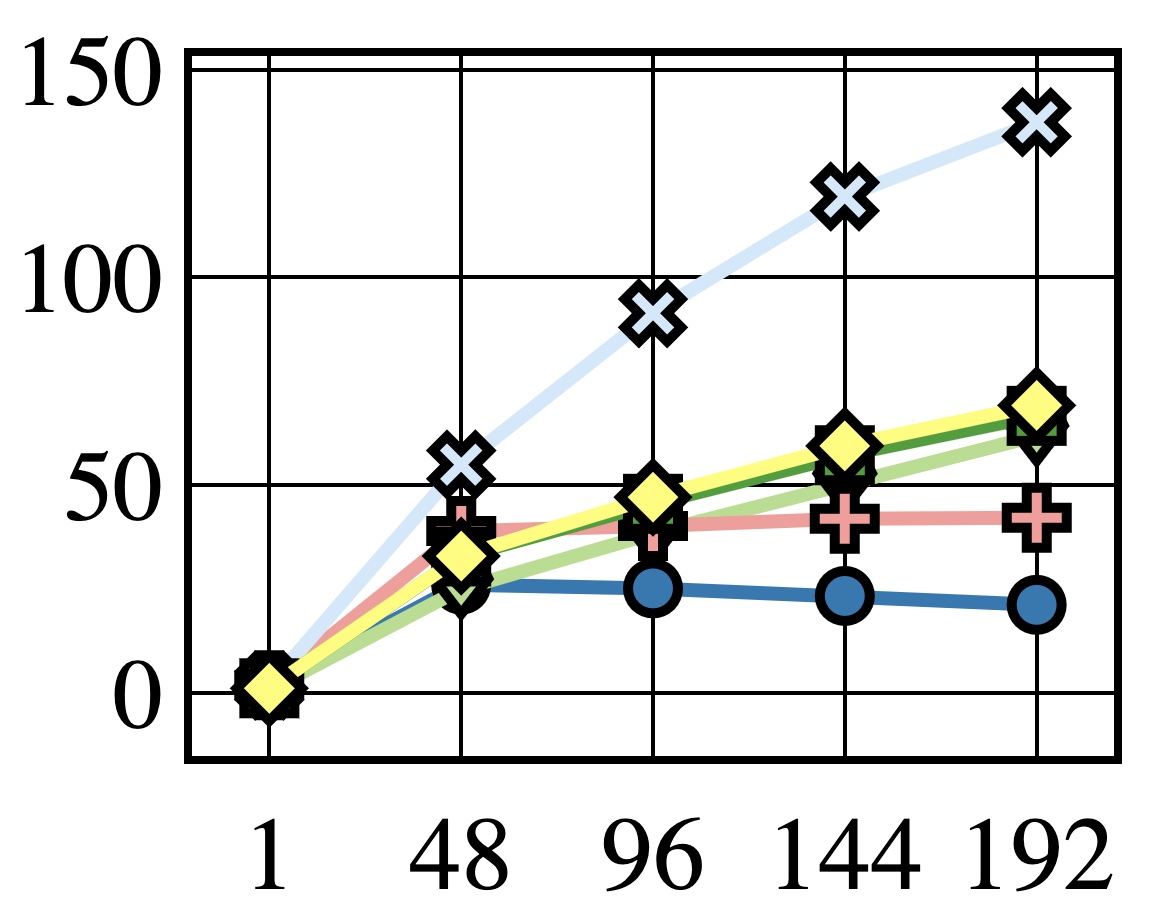}
        \caption{$10-80-10$}
        \label{fig2:ct-10-80-10}
    \end{subfigure}
    \hspace{-5pt}
    \begin{subfigure}{0.17\textwidth}
        \includegraphics[width=\textwidth]{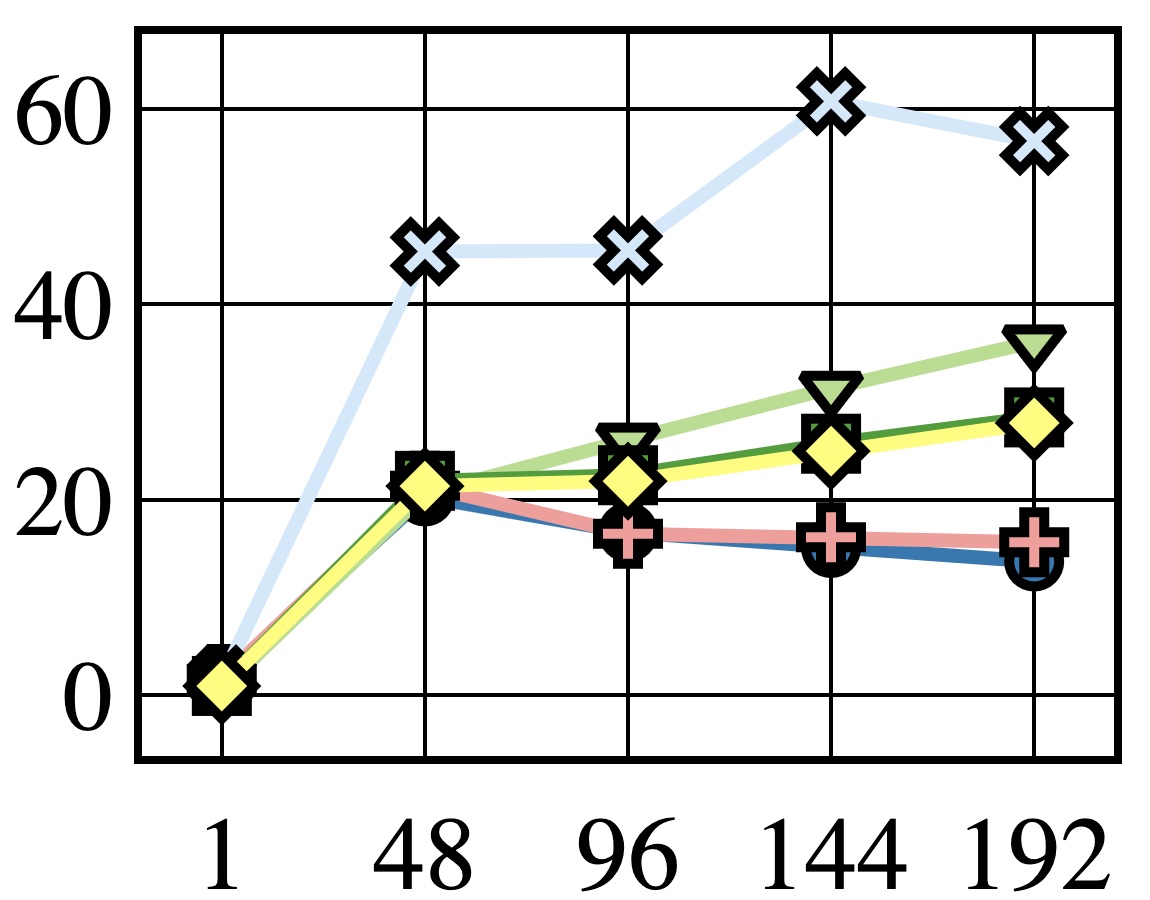}
        \caption{$50-40-10$}
        \label{fig2:ct-50-40-10}
    \end{subfigure}
    \hspace{-7pt}
    \begin{subfigure}{0.17\textwidth}
        \includegraphics[width=\textwidth]{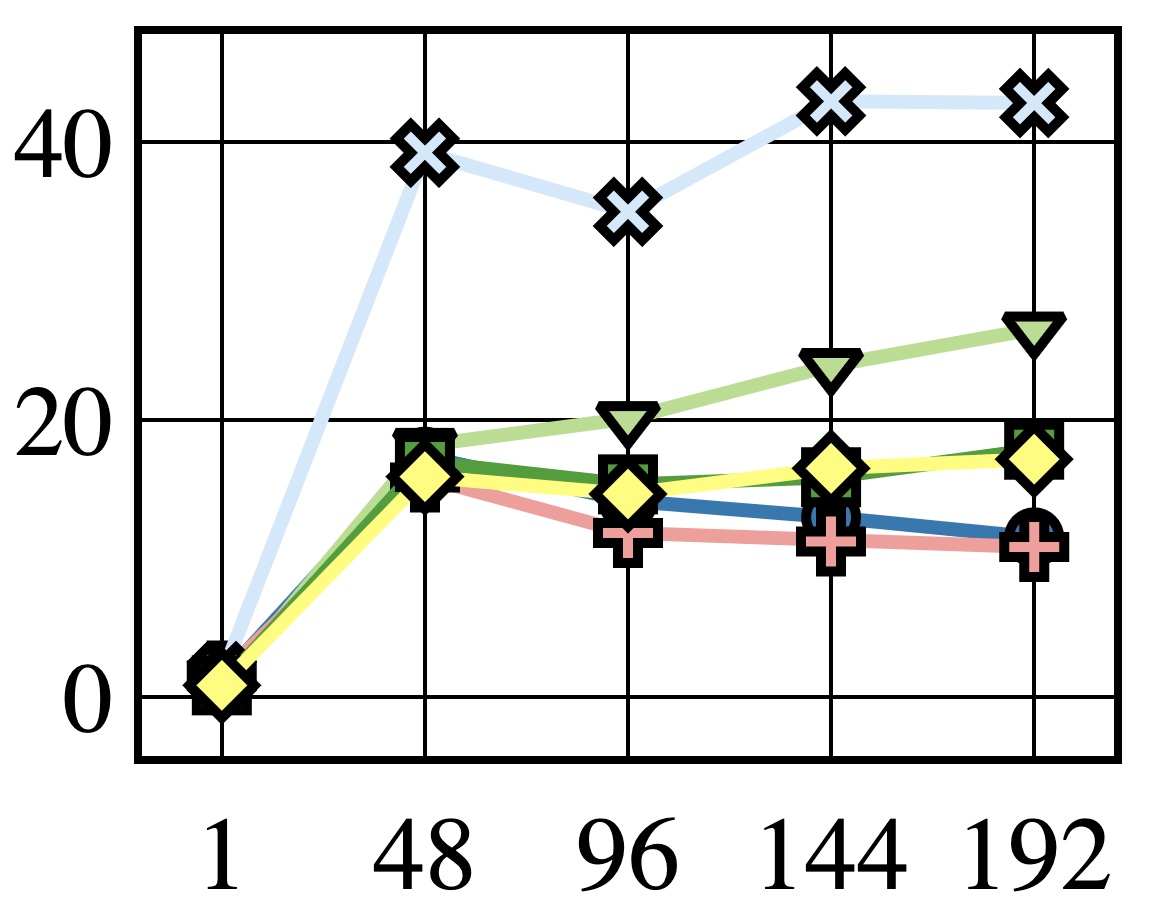}    
        \caption{$90-0-10$}
        \label{fig2:ct-90-0-10}
    \end{subfigure}
    \hspace{-7pt}
    \begin{subfigure}{0.17\textwidth}
        \includegraphics[width=\textwidth]{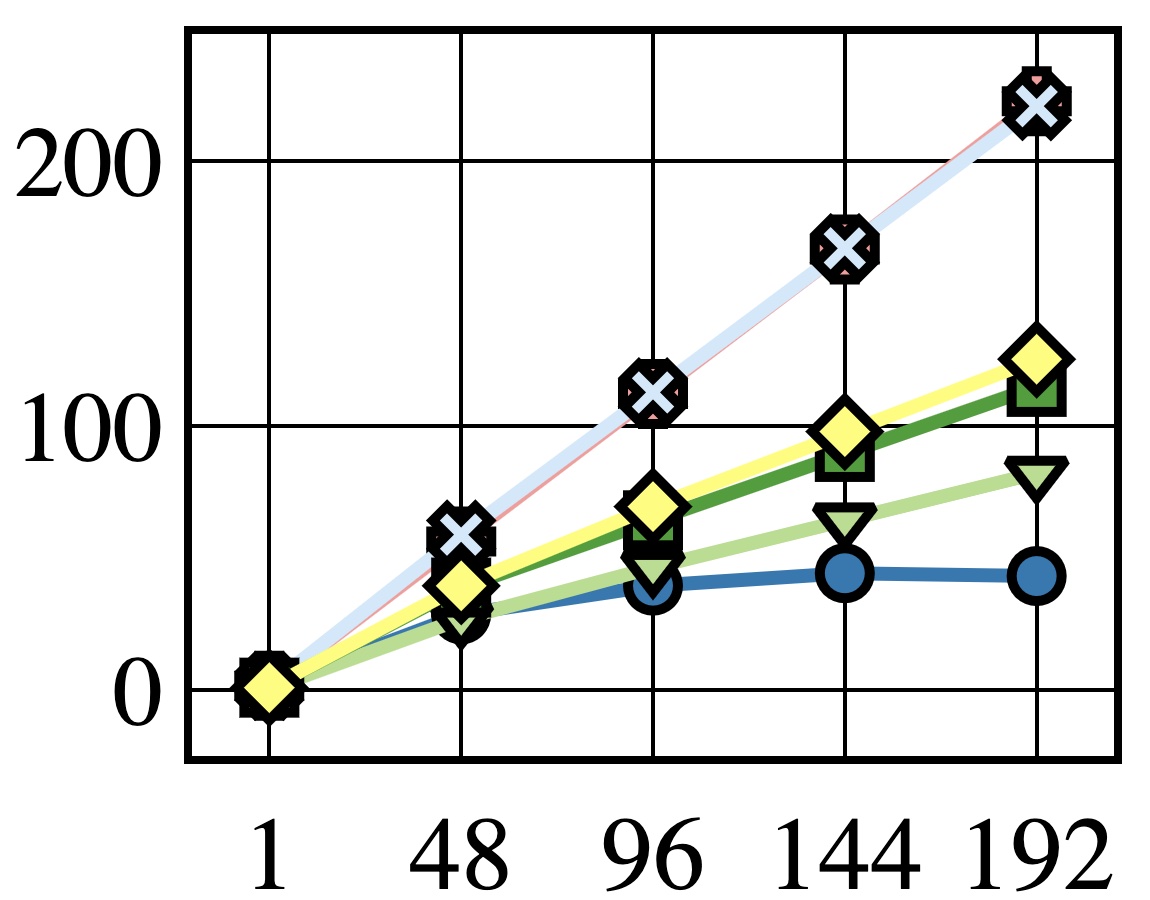}
        \caption{$0-90-10$}
        \label{fig2:ct-0-90-10}
    \end{subfigure}
    \hspace{-7pt}
    \begin{subfigure}{0.17\textwidth}
        \includegraphics[width=\textwidth]{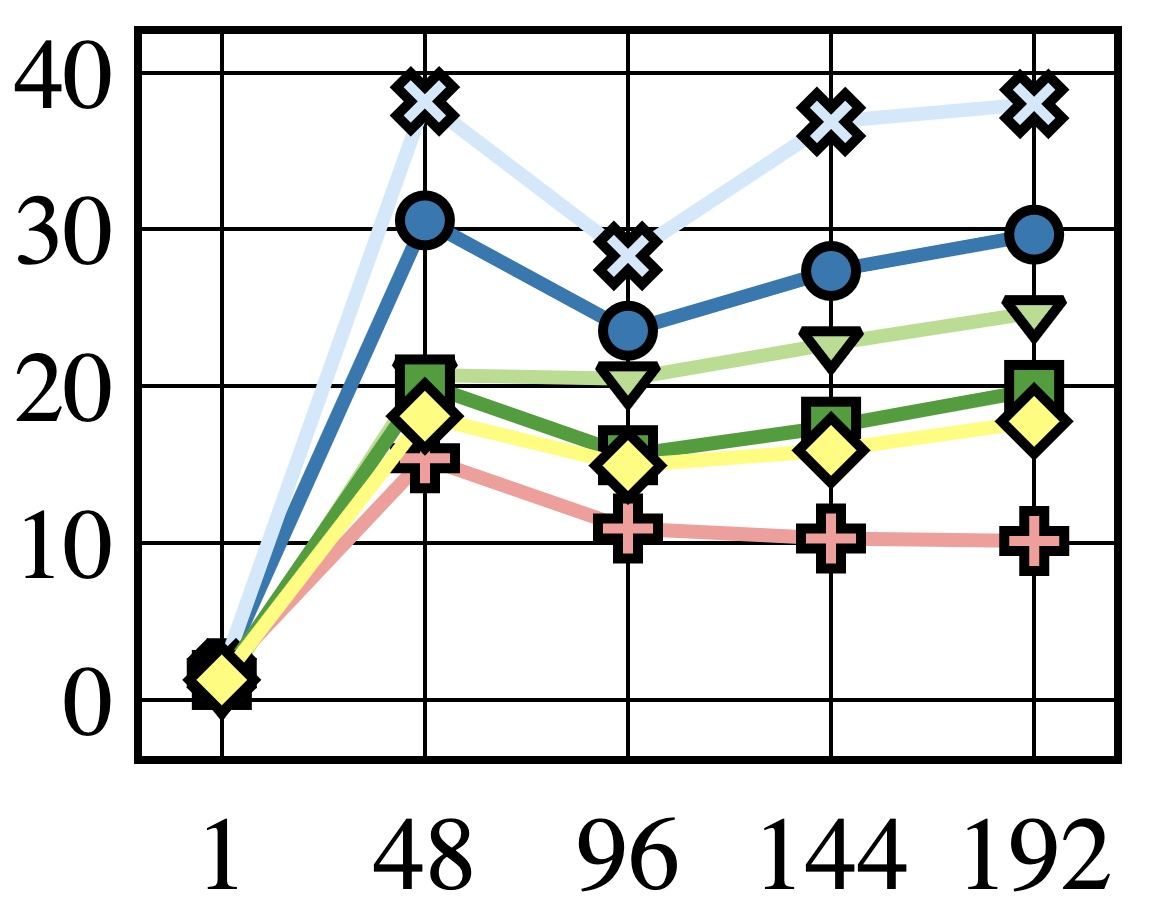}
        \caption{$100-0-0$}
        \label{fig2:ct-100-0-0}
    \end{subfigure}\\
    \caption{Throughput (Mops/s) under various workload configurations for the skip list (SL) (a) -- (f) and Citrus tree (CT) (g) -- (l), \cradd{while varying} the number of threads on the x-axis. Workloads are written as $U-C-RQ$, corresponding to the percentages of update ($U$), contains ($C$) and range queries ($RQ$).}
    \label{fig:fig2}
\end{figure*}

\underline{Traversal Instrumentation.}
Traversal-dominant data structures (e.g., linear data structures like a linked list) offer insight into the overhead required by each algorithm for supporting range queries.
The most obvious consequence of the above observation is that the performance of strategies that require every access to be instrumented suffer compared to those that support uninstrumented traversals.
Figure~\ref{fig:lazylist} shows the total throughput of operations on a lazy list as a function of the number of worker threads at both 10\% updates and 90\% updates with 10\% range queries.

At 10\% updates, Bundle avoids paying a per-node cost for all operations by performing at least a portion of traversals using regular links.
Only after landing in the (proximity of the) target range (or key) Bundle starts incurring an overhead.
Although internally EBR-RQ uses double-compare single-swap (DCSS), which may add overhead per-dereference, its design is such that a single pointer can be used to reference both the descriptor and the next node.
RLU's overhead is concentrated in updates because they wait for ongoing reads; this cost becomes apparent at higher thread-counts.
In contrast to the other competitors, vCAS requires an additional dereference per-node to maintain linearizability, leading to less than half the throughput of bundling.

At 90\% update rates, the drawbacks of other competitors start to become visible.
For example, RLU's heavy update synchronization causes its performance to fall below that of vCAS.
Bundle and EBR-RQ maintain higher performance because of their low instrumentation costs.


\underline{Varying Workload Mix.}
Figure~\ref{fig:fig2} reports the throughput (Mops/s) of different workload mixes as a function of thread count for both the skip list (Figures~\ref{fig2:sl-2-88-10}-\ref{fig2:sl-100-0-0}) and Citrus tree (Figures~\ref{fig2:ct-2-88-10}-\ref{fig2:ct-100-0-0}).
Except for the two rightmost columns, the range query percentage is 10\%, while the update and contains percentages vary. 
\crremove{For space constraints, results for 2\% and 50\% range query operations can be found in Section~\ref{appendix:addition-plots} of the Supplemental Materials, and the trends are similar.}
\cradd{Results for 2\% and 50\% range query operations can be found in the companion technical report~\cite{nelson2021bundling}.}

Our first general observation is that our bundled data structures perform best compared to RLU and EBR-RQ when the workload is mixed.
In the four left-most columns, Bundle matches or surpasses the performance of the next best strategy, in most cases.
Under these mixed workloads, Bundle achieves maximum speedups compared to EBR-RQ of 1.8x in the skip list (Figure~\ref{fig2:sl-2-88-10}) and 3.7x in the Citrus tree (Figure~\ref{fig2:ct-2-88-10}).
Against RLU, bundling can reach 1.8x better performance (Figure~\ref{fig2:ct-50-40-10}).
For both, the best speedups are achieved at 192 threads.
Overall, the mixed configurations represent a wide class of workloads; whereas, the two rightmost columns represent corner case workloads and are discussed separately, later.
Although we do not test an RLU implementation of the skip list, based on our observations of the Citrus tree results, we expect it would offer similar performance to Bundle for lower update percentages.




In more detail, Bundle performs better than EBR-RQ in low-update workloads by avoiding high-cost range queries.
Bundling localizes overhead by waiting for pending entries, limiting the impact of concurrent updates on range query performance.
In contrast, EBR-RQ's design allows that possibly many additional nodes be visited in the limbo lists to finalize a snapshot.
For the skip list, even at 2\% updates, we measure EBR-RQ traversing an average of 141 additional nodes \emph{per-range query} at 48 threads and up to 575 nodes at 192 threads.
The above numbers are nearly identical for both the 
lazy list and Citrus tree, indicating that this behavior is independent of data structure.

The case is similar for the Citrus tree, where Bundle outperforms EBR-RQ by an average 1.2x across mixed workloads at 48 threads, except 90\% where it is 10\% worse.
At higher thread counts, bundling has the clear advantage.
For all workloads at 48 threads, Bundle does not perform better than RLU.
At a high number of threads, Bundle is never worse.
For example, when more updates are present in the workload (Figures~\ref{fig2:ct-50-40-10} and~\ref{fig2:ct-90-0-10}), Bundle outperforms RLU (1.8x and 1.6x) at 192 threads.
Interestingly, at high update workloads the Citrus tree does not scale well beyond one NUMA zone, regardless of the presence of range queries.

Compared to vCAS, Bundle performs up to 1.2x-1.5x better in low update workloads (i.e., 0\%, 2\% and 10\% updates) across all thread configurations due to its lightweight traversal.
In update-intensive workloads (i.e., 50\% and 90\% updates), Bundle and vCAS have very similar performance for the skip list, within 5\%.
This is because their update operations have similar execution patterns (i.e., adding a new version to a lists) and the underlying data structure synchronization starts impacting performance.
In the Citrus tree, vCAS demonstrates better performance at 50\% and 90\% updates and high thread count: as much as 1.5x.
This is because operations in vCAS are able to help other concurrent operations, while Bundle must wait for them to finish.
Exploring whether bundling can exploit a similar behavior is an interesting future work.


To better understand the (extreme) cases in which Bundle performance is surpassed by either RLU or EBR-RQ, we note that RLU and EBR-RQ prefer workloads at opposing ends of the configurations spectrum.
In the read-only setting (Figure~\ref{fig2:ct-0-90-10}), RLU performs well because there are no synchronization costs, and achieves performance nearly equivalent to Unsafe in the Citrus tree.
\cradd{We also evaluate the case of range query only workload and we obserse identical trends as the read-only case (Figures~\ref{fig2:sl-100-0-0} and~\ref{fig2:ct-100-0-0}).}

\cradd{Despite its impressive performance} \crremove{However}, recall that even a low percentage of updates is enough to cause \cradd{RLU} \crremove{this impressive performance} to collapse (see Figure~\ref{fig2:ct-2-88-10}), namely from the synchronization enforced by writers (i.e., RLU-sync).
Hence, when a workload is primarily updates (Figures~\ref{fig2:ct-50-40-10}, \ref{fig2:ct-90-0-10}, and \ref{fig2:ct-100-0-0}), RLU is worst.
On the other hand, EBR-RQ range queries increment a global epoch counter and validate their snapshot, which leads to worse performance in read-dominated workloads compared to update-intensive ones.
In fact, at update-only workloads (the rightmost column), EBR-RQ behaves nearly as well as Unsafe, while for read-only (the second-to-last column) workloads it is the worst competitor.

The performance of both Bundle and vCAS in those extreme cases demonstrates that they properly handle this trade-off, with Bundle outperforming in the read-only cases due to its lightweight traversal.

\underline{Varying Range Query Size.}
~Figure~\ref{fig:rqlen} shows the throughput (Mops/s) of update and range query operations on a Citrus tree when a single NUMA zone is partitioned such that half of the hyperthreaded cores (i.e., 24) are running 100\% updates and the other half are running 100\% range queries, \cradd{while varying the range query length.}
\crremove{We also increase the range query length to understand how the two operations interact.}

Our first observation is that for Bundle, EBR-RQ, and vCAS, the performance of update operations is not negatively impacted by the length of range queries, while for RLU this is not the case.
In the former three strategies, update operations are not blocked by range queries, whereas in the latter they must synchronize with ongoing read operations.
The result is that for RLU, longer running range queries have a detrimental impact on update performance.

Considering range query performance, EBR-RQ throughput is effectively flat across all range query sizes.
This is attributed to the required checking of limbo-lists and the helping mechanism to support DCSS.
The result is that it dominates the cost of range queries and even short ones perform poorly.
On the other hand, Bundle, vCAS, and RLU demonstrate much better performance for short range query operations (i.e., >10x over EBR-RQ). 
The disparity is primarily due to the cost of EBR-RQ range queries depending on the number of concurrent update operations, while Bundle, vCAS, and RLU only incur local synchronization costs.

\subsection{Database Integration Performance}

We evaluate the application level performance of bundling by running the TPC-C benchmark in DBx1000~\cite{stonebreaker}, an in-memory database system, integrated with our bundled skip list and Citrus tree. 
The data structures implement the database indexes and the number of warehouses is equal to the number of threads.
Threads access other warehouses with a 10\% probability.

\begin{figure}[t]
    \centering
        \includegraphics[width=0.47\textwidth]{figures/legend-microbench.jpg}\\
    \hspace{-25pt}
    \begin{subfigure}{.04\textwidth}
        \centering
        \includegraphics[width=.75\textwidth]{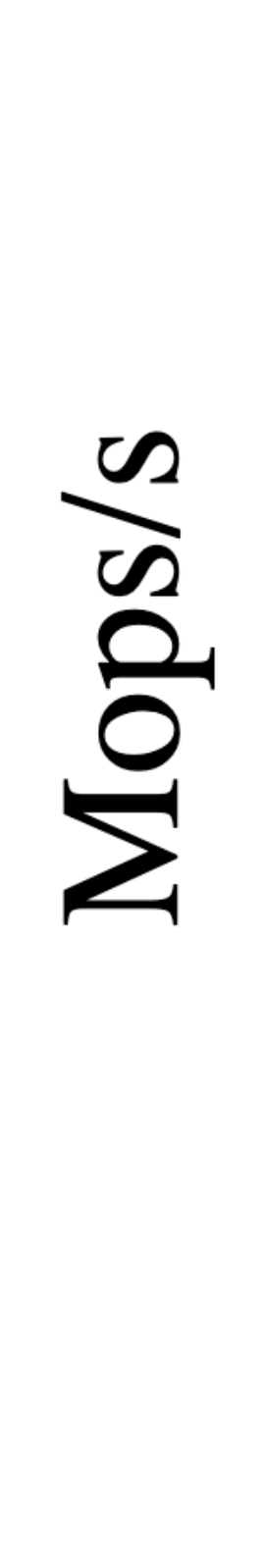}
    \end{subfigure}
    \begin{subfigure}{.4\textwidth}
        \centering
    \includegraphics[width=1.1\textwidth]{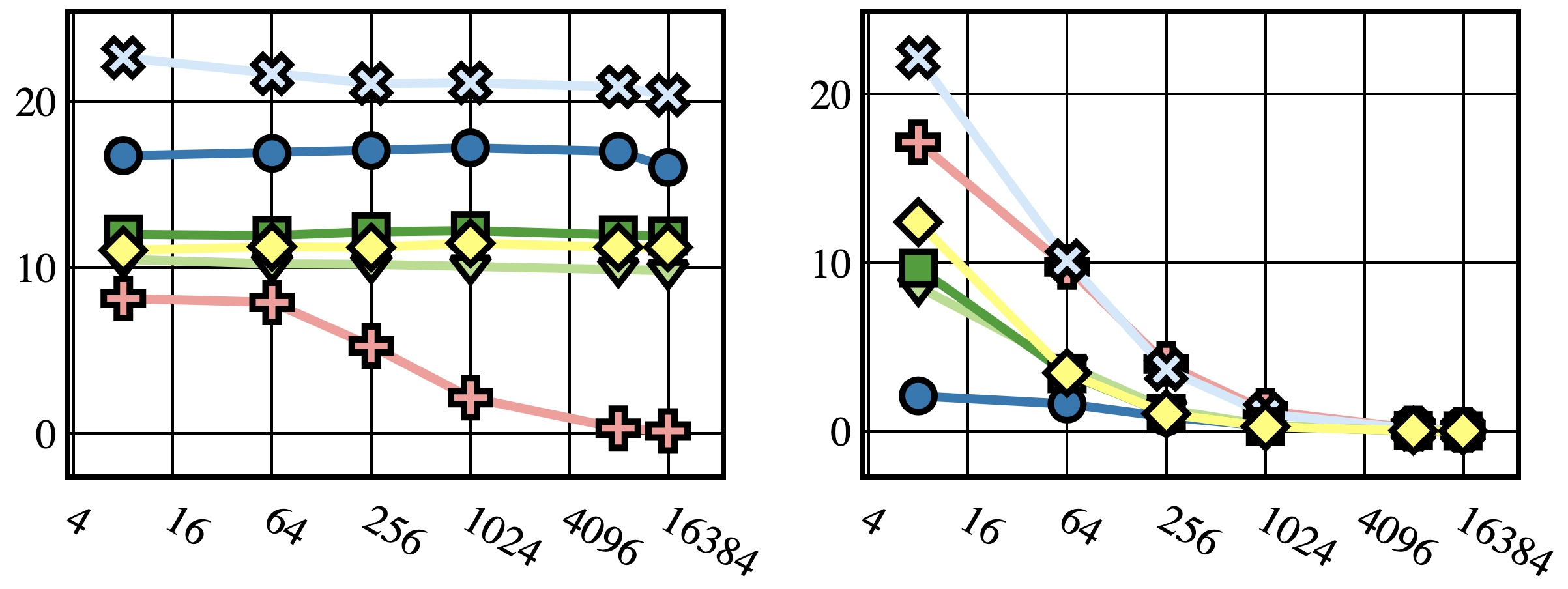}
    \end{subfigure}
    \vspace{-15pt}
    \caption{Update (left) and range query (right) throughput for 24 update-only threads and 24 range-query-only threads \cradd{while varying the range query size.}}
    \label{fig:rqlen}
\end{figure}

\begin{figure}[t]
    \centering
        \includegraphics[width=0.38\textwidth]{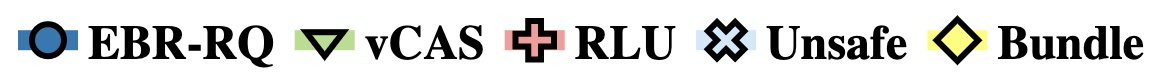}\\
    \vspace{-5pt}
    \hspace{-25pt}
    \begin{subfigure}{.04\textwidth}
        \centering
        \includegraphics[width=.75\textwidth]{figures/fig2yaxis.pdf}
    \end{subfigure}
    \hspace{-8pt}
    \begin{subfigure}{.4\linewidth}
        \includegraphics[width=1.2\textwidth]{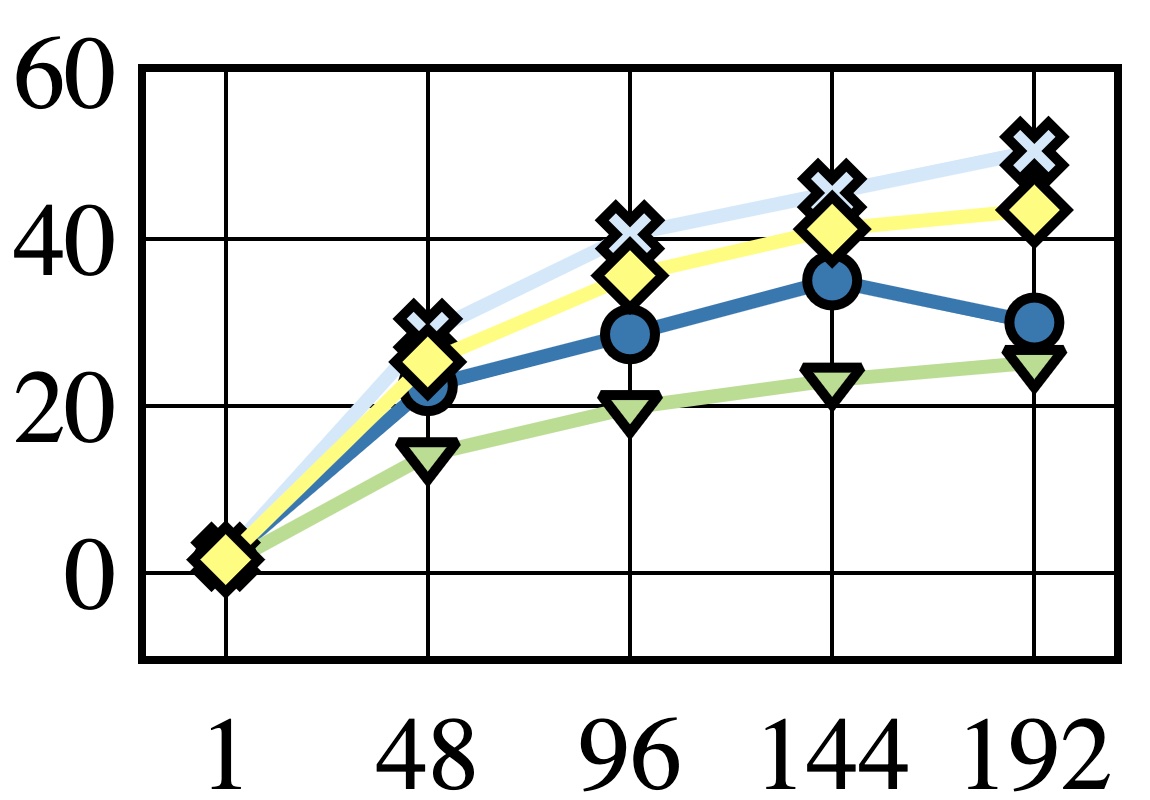}
    \end{subfigure}
    \hspace{15pt}
    \begin{subfigure}{.4\linewidth}
        \centering
        \includegraphics[width=1.2\textwidth]{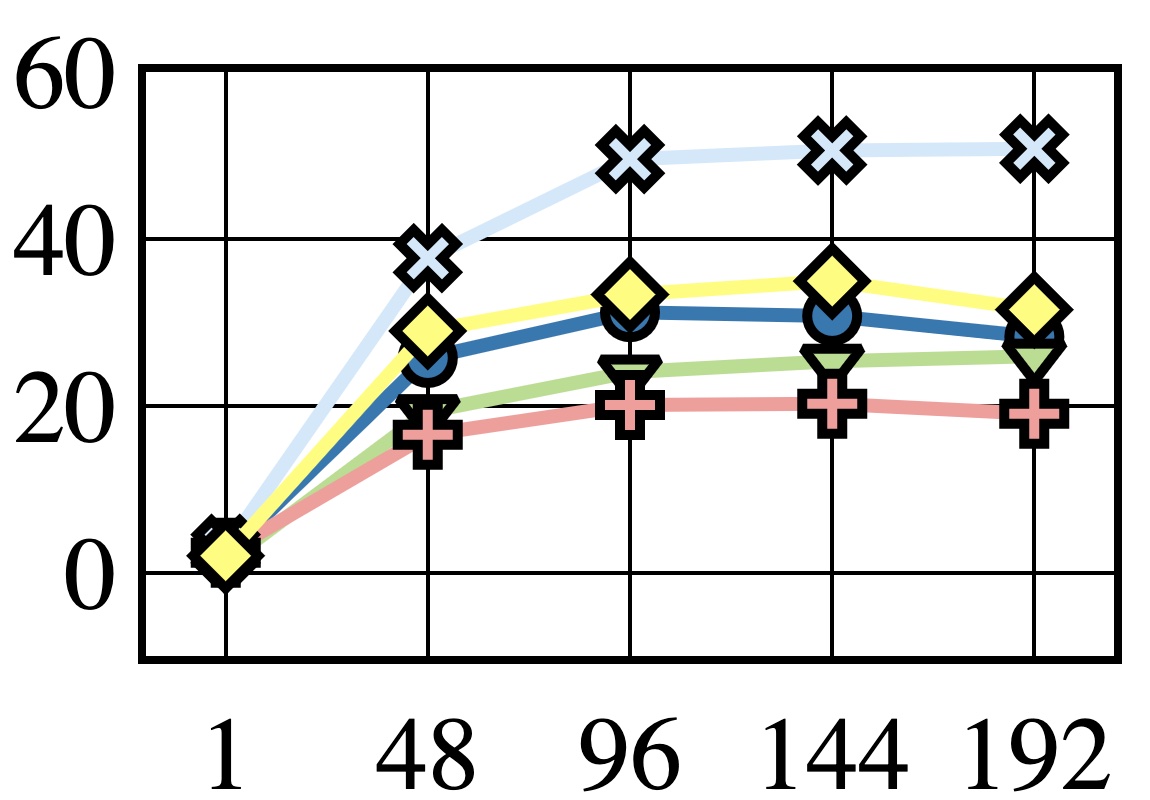}
    \end{subfigure}
    \vspace{-10pt}
    \caption{Throughput (Mops/s) of index operations in DBx1000 running the TPC-C benchmark \cradd{at different thread counts}. Skip list (left) and Citrus tree (right). There is no RLU-based skip list.}
    \label{fig:macrobench}
\end{figure}

We use the \texttt{NEW\_ORDER} (50\%), \texttt{PAYMENT} (45\%) and \texttt{DELIVERY} (5\%) transaction profiles.
The \texttt{DELIVERY} profile is particularly interesting since its logic includes a range query over the index representing the new order table, ordered by \texttt{order\_id}, with the goal of selecting the oldest order to be delivered. Next, the order is deleted to prevent subsequent \texttt{DELIVERY} transactions from delivered the same order again. 
In our experiments, the range query selects the oldest order in the last 100 orders.
A \texttt{PAYMENT} transaction performs a range query on the customer index to look up a customer by name with 60\% probability and then updates the total income according to the amount on the payment.
\texttt{NEW\_ORDER} modifies multiple tables and updates their indexes accordingly, including the new order index.

We report the total index throughput measured across the system when the skip list and Citrus tree is used as an index (Figure~\ref{fig:macrobench}).
Note that we elide a comparison against the baseline DBx1000 index since it is a hashmap and does not support range queries.
Our results demonstrate that bundling outperforms all competitors regardless of the number of threads used. 
Summarizing our findings, 
Bundle is on average, across all thread configurations, only 13\% worse than Unsafe for the skip list but 29.6\% for the Citrus tree.
In comparison to the next best competitor (EBR-RQ), Bundle achieves 1.4x and 1.1x better performance for the skip list and Citrus tree, respectively.



\toremove{
\subsection{Discussion}
Bundling manages the trade-off between update-intensive and read-only workloads effectively.
The overhead of updating bundles is relatively low, which improves upon RLU's synchronization costs. 
On the other hand, limiting range query traversals to the necessary nodes improves upon EBR-RQ.
Compared to vCAS, bundling shines in traversal dominant data structures because of its optimistic \texttt{pre-range} phase.
Hence, performance consistency across different workloads is an important byproduct of bundling.
In update-heavy workloads, the dominating contributor to performance is the throughput of update operations themselves.

Unlike its competitors, bundling does not sacrifice performance in one case for higher performance in the other, which leads to overall better throughput.
Our evaluation demonstrates that our
design is better suited for systems that have different workloads on different internal data structures, as is typical in database systems, without the need for multiple implementations each targeting specific workload distributions.
}
\section{Conclusion}
\label{sec:conclusion}

We presented three concurrent linked data structure implementations 
deploying a novel building block, called bundled references, to enable range linearizable query support. Bundling data structure links with our building block shows that the coexistence of range query and update operations does not forgo achieving high-performance, even in lock-based data structures.

\section*{Acknowledgments}
Authors would like to thank the shepherd and the anonymous reviewers for all the constructive comments that increased the quality of the final version of the paper. This material is based upon work supported by the National Science Foundation under Grant No. CNS-1814974 and CNS-2045976.


\bibliography{references}

\newpage
\appendix
\section{Artifact Description}

Here we give a high-level overview of the artifact supporting this paper. 
For more detailed instructions on how to configure, run and plot the results, we refer the reader to the comprehensive README file included in the root directory of the source code (\url{https://zenodo.org/record/5790039}).
Our implementation graciously builds upon the existing benchmark found in~\cite{ebr-rq}.

\subsection{Requirements} 
The project is written primarily in C++ and was compiled with GNU Make using the GCC compiler with the compilation flags \texttt{-std=c++11 -mcx16 -O3}.
The two primary dependencies for the project are the \texttt{jemalloc} memory allocation library and the \texttt{libnuma} NUMA policy library.
The plotting scripts are written in Python and rely on a handful of libraries.
For example, we use the Plotly plotting library to generate interactive plots.
A full list of dependencies can be found in the README.
Note that NUMA must be enabled to run properly and it is recommended to run on a machine with at least 48 threads to reproduce our results.

\textbf{Docker.}
For convenience, we also include a Dockerfile that handles setting up the environment.
However, if using Docker, keep in mind that performance may suffer and the results may differ from those presented in the paper.

\subsection{Code Organization}
The root directory contains implementations of all range query techniques, data structures incorporating them, and the benchmarks used for our evaluation.
The following is a list of notable directories that pertain to our design and evaluation.

\vspace{10pt}
\textbf{Relevant directories:}
\begin{itemize}
 \item \texttt{bundle} (bundled reference implementation)
 \item \texttt{bundle\_lazylist} (bundled lazylist implementation)
 \item \texttt{bundle\_skiplist\_lock} (bundled skip list implementation)
 \item \texttt{bundle\_citrus} (bundled Citrus tree implementation)
 \item \texttt{vcas\_lazylist} (vCAS-based lazylist)
 \item \texttt{vcas\_skiplist\_lock} (vCAS-based skip list)
 \item \texttt{vcas\_citrus} (vCAS-based Citrus tree)
 \item \texttt{rq} (range query provider implementations)
 \item \texttt{microbench} (microbenchmark code)
 \item \texttt{macrobench} (macrobenchmark code)
\end{itemize}
\vspace{10pt}

Our primary contributions are located in the directories prefixed with \texttt{bundle}, which implement the bundled references used when bundling data structures, as well as our bundled versions of the lazylist, skip list and Citrus tree.
The same naming strategy is used for the RLU- and vCAS-based approaches.
The \texttt{rq} directory contains implementations of range query providers, including the one used by bundling, which defines the interaction with the global timestamp.

The \texttt{microbench} and \texttt{macrobench} directories contain the code used for our evaluation.
The microbenchmark (i.e., \texttt{microbench}) consists of two experiments designed to understand the implications of design choices between competitors, corresponding to Figure~\ref{fig:lazylist}, Figure~\ref{fig:fig2} and Figure~\ref{fig:rqlen} in the paper.
The first experiment tests various update rates while fixing the range query size to 50 and the range query rate to 10\%. 
Results of this experiment are shown in the first two of the aforementioned figures.
The second experiment measures performance of a workload with 50\% updates and 50\% range queries, while adjusting the range query size, and corresponds to Figure~\ref{fig:rqlen}.
This experiment uses dedicated threads to perform each operation type to highlight the effects of updates on range queries, and vice versa.
Finally, the macrobenchmark (i.e., \texttt{macrobench}) integrates the data structures as indexes in the DBx1000 in-memory database.
For both, data is saved to the benchmark's \texttt{data} sub-directory, which is then processed by the respective \texttt{make\_csv.sh} scripts and plotted using \texttt{plot.py} (located in the root directory).

\subsection{Initial Configuration}
Configuring the above benchmarks relies on six parameters that must be adjusted according to the testbed, prior to compilation. The following can be found in the \texttt{config.mk} file located in the root directory:

\begin{itemize}
 \item \texttt{allocator}, the allocator used by the benchmarks (blank represents the system allocator)
 \item \texttt{maxthreads}, the max number of concurrent threads
 \item \texttt{maxthreads\_powerof2}, \texttt{maxthreads} rounded up to a power of 2
 \item \texttt{threadincrement}, the thread step size
 \item \texttt{cpu\_freq\_ghz}, processor frequency for timing
 \item \texttt{pinning\_policy}, the thread affinity mapping
\end{itemize}

Note that the maximum number of threads cannot exceed the maximum number of threads in the testbed.
See the Configuration Tips section of the README for additional information and examples.

\subsection{Microbenchmark}

For the following section we assume that the current working directory is \texttt{microbench}.
To build the benchmark, use the command: \texttt{make -j lazylist skiplistlock citrus rlu}.
Although each data structure binary can be run independently to test specific configurations (see the README), there are scripts to generate a full suite of configurations automatically.
Once compiled, the experimental configurations presented in this paper can be run using \texttt{./runscript.sh}.

Internally it calls \texttt{./experiment\_list\_generate.sh}, which configures the range query techniques, data structures, and sizes to test and defines the experiments shown in the paper.
The \texttt{./runscript.sh} script then reads the set of experimental configurations generated and starts executing them.
The number of trials and length of each can be adjusted in the \texttt{./runscript.sh} file.
The last important configuration file in the microbenchmark is \texttt{supported.inc}, which determines which combination of data structure and size configurations are tested.

A file called \texttt{summary.txt} is created in the \texttt{./data} that contains an overview for the entire run.
Individual results are stored in sub-directories whose path is\\ \texttt{./data/<experiment>/<data\_structure>}.
If there are any errors or warnings during execution they will be reported in the \texttt{./warnings.txt} file.
We describe how to generate plots from the results in Section~\ref{sec:plotting}.

\subsection{Macrobenchmark}

The macrobenchmark demonstrates the performance of competitors by replacing the index of the in-memory database system called DBx1000~\cite{stonebreaker}.
We assume for the remainder of this section that the working directory is \texttt{./macrobench}.

To build this portion of the project, run the \texttt{./compile.sh} script.
Once complete, the macrobenchmark can then be executed with the \texttt{./runscript.sh} script.
This script is pre-configured to execute the TPC-C benchmark using a workload mix consisting of 50\% \texttt{NEW\_ORDER}, 45\% \texttt{PAYMENT} and 5\% \texttt{DELIVERY} transactions.
Similar to the microbench, all results are saved in the \texttt{./data} sub-directory.

\subsection{Plotting Results}
\label{sec:plotting}
In order to generate plots, we first translate the results produced by the respective \texttt{./runscript.sh} scripts into CSV files for ingestion by our plotting script.
This is handled automatically by the \texttt{plot.py} Python script, which calls the \texttt{make\_csv.sh} script within each benchmark directory.

After the microbenchmark and macrobenchmark have been executed successfully, plots can be generated using \texttt{python plot.py $--$save\_plots} and including either the \texttt{$--$microbench} or \texttt{$--$macrobench} command line flag.
This will first generate the requisite CSV file in the corresponding data sub-directory and then create plots from the data contained in the CSV files.
The experimental configuration is automatically loaded, but we note that other command line flags can override their values, if necessary.
Figures are saved as interactive HTML files in the \texttt{./figures} directory, located in the root directory of the project.
The plots can be viewed using a browser.

\subsection{Expected Output Files}
Compiling, running and generating plots produces a number of output files. Here, we provide a summary of the expected output after all of the above steps have been executed. Note that we use angle brackets to indicate that the contents are repeated for the range query techniques (i.e., \texttt{rq}) and data structures (i.e., \texttt{ds}).
Although other files are generated by the compilation phases (e.g., object files), we do not list them here but instead focus on those of most importance to the user.
We assume that the current directory is the project's root directory.

\DTsetlength{0.2em}{1em}{0.2em}{0.4pt}{1.6pt}
{\Small
\dirtree{%
.1 ./.
.2 microbench/.
.3 data/.
.4 summary.txt.
.4 rq\_sizes/.
.5 <rq>/.
.6 *.out.
.4 workloads/.
.5 <rq>/.
.6 *.out.
.3 <machine\_name>.<ds>.<rq>.out.
.2 macrobench/.
.3 data/rq\_tpcc/.
.4 summary.txt.
.4 *.out.txt.
.3 bin/<machine\_name>/rundb\_TPCC\_IDX\_<ds>\_<rq>.out.
.2 figures/.
.3 microbench/.
.4 rq\_sizes/.
.5 <ds>/.
.6 *.html.
.4 workloads/.
.5 <ds>/.
.6 *.html.
.3 macrobench/.
.4 <ds>/.
.5 *.html.
}
}

\section{Correctness}
\label{appendix:correctness}

We prove that range query operations are linearizable by showing that \textit{i)} a range query visits all nodes that belong to its linearizable snapshot, and that \textit{ii)} a range query does not visit any other node that does not belong to its linearizable snapshot. By extension, we also prove the correctness of contains operations, which are treated as a range query of one element. Finally, we prove that optimizing contains to elide reading the global timestamp does not compromise its correctness. Proving the linearization of the other operations (i.e., insert and remove) follows from the correctness arguments of the original data structure.

\begin{theorem}
\label{theorem:non-restartable}
Bundled data structures are linearizable.
\end{theorem}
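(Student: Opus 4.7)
The plan is to establish linearization points for each operation and then show that the observed behavior is equivalent to a sequential execution in which operations take effect atomically at those points. For update operations, I would fix the linearization point at the atomic \texttt{FetchAndAdd} on \texttt{globalTs} inside \texttt{PrepareBundles}; for range queries, at the read of \texttt{globalTs} during the \texttt{enter-range} phase; and for optimized contains operations, at the \texttt{DereferenceBundle} of the final bundle along its path (or, if concurrent with a pending update on that bundle, immediately after that update's increment of \texttt{globalTs}). The correctness of update operations in isolation follows from the original (non-bundled) data structure proofs, so the new content is showing that (i) range queries and contains see a snapshot consistent with these linearization points, and (ii) the optimization that lets contains skip reading \texttt{globalTs} is sound.

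The second step is to formalize a few invariants that the bundle machinery maintains. First, bundle entries in each bundle appear in strictly decreasing timestamp order: this follows because \texttt{PrepareBundles} waits on pending heads (line~\ref{line:updatewaitpendingbundle}) before prepending, and \texttt{globalTs} is incremented exactly once per prepare. Second, for every timestamp $t$ equal to some completed update's commit value, the graph obtained by following, from the sentinel, the newest bundle entry at each node whose timestamp is $\leq t$ coincides with the abstract set of the data structure at the moment of that update. I would prove this by induction on the sequence of completed updates, using the book-ending discipline (\texttt{PrepareBundles}, critical section, \texttt{FinalizeBundles}) together with the rule that newly inserted nodes become reachable only after the increment.

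The third step is to use these invariants to verify the two claims stated before the theorem. For \emph{inclusion}, once a range query reads $ts$ at line~\ref{line:readglobalts}, the \texttt{enter-range} loop and the \texttt{collect-range} traversal both call \texttt{DereferenceBundle} with $ts$, which blocks on any pending entry until it is finalized; hence every update with timestamp $\leq ts$ is observed and no node belonging to the snapshot at $ts$ is missed. For \emph{exclusion}, entries with timestamp $> ts$ are skipped by construction of \texttt{DereferenceBundle}, so no node from a later snapshot intrudes. The contains operation reduces to a single-key range query, and the optimized version using an infinite timestamp is linearizable because the linearization point is placed at the final dereference: any update finalized before that dereference is observed, and any update whose finalization races with the dereference is serialized by the PENDING wait.

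The main obstacle I anticipate is the \emph{divergent path} case: the uninstrumented \texttt{pre-range} phase may end at a predecessor node that has already been unlinked by the time \texttt{globalTs} is read, and a chain of subsequent removals could otherwise leave the range query stranded outside the portion of the structure relevant to its snapshot. Handling this requires leveraging the design choice (Sections~\ref{sec:lazylist} and~\ref{sec:citrus}) that a removed node's own bundle is updated with an entry pointing back to the head (or root). I would argue that after dereferencing such an entry at timestamp $ts$, the subsequent loop at lines~\ref{line:enter-range}--\ref{line:enter-range-while} re-enters the structure from a sentinel and, by the snapshot invariant above, reaches the correct first node in the range; termination follows because the snapshot at $ts$ is a finite, well-formed instance of the data structure. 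Tying this case cleanly into a single case analysis on what the predecessor looks like at the instant of the \texttt{globalTs} read is the delicate part of the argument.
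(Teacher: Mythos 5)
Your proposal is correct and follows essentially the same route as the paper's proof: identical linearization points (the \texttt{globalTs} increment for updates, the \texttt{globalTs} read for range queries, the final bundle dereference for contains), the same reliance on pending-entry blocking to order conflicting operations, and the same resolution of the divergent-path case via the removed node's bundle entry redirecting to the head or root. The only difference is presentational --- you package the core argument as a snapshot invariant proved by induction over completed updates, whereas the paper argues directly via a case analysis on an arbitrary key in the range against concurrent inserts and removes; both yield the same inclusion/exclusion conclusions.
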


\begin{proof}
The linearization point of updates is Line 9 of Algorithm~\ref{algo:prepare} (i.e., incrementing \texttt{globalTS}). That is because the global timestamp is only incremented after the necessary pending bundle entries are installed, blocking conflicting update operations. Hence, the global timestamp enforces a total order among conflicting update operations.

The next step is to prove that range queries are linearized at the moment they read \texttt{globalTS} (Line 9 in Algorithm~\ref{algo:rangequery}). Consider a range query operation RQ whose linearization point is at timestamp $t_{RQ}$.
Once RQ reaches this linearization point, RQ holds a reference to a node that has been reached using standard links in the \texttt{pre-range} phase (we call it $\mathcal{N}$). We first show that $\mathcal{N}$ has three important characteristics:
\begin{itemize}
    \item RQ is guaranteed to find a bundle entry in $\mathcal{N}$ that satisfies $t_{RQ}$. This can be shown as follows. 
    Since updates are assumed to only make a node reachable after incrementing the global timestamp, $\mathcal{N}$ either has a pending bundle entry or a finalized one. In either case, RQ necessarily arrives after the increment and must read a timestamp greater than or equal to the timestamp obtained by the update that inserted $\mathcal{N}$.
    
    \item $\mathcal{N}$ can be used to sequentially (i.e., in absence of concurrent updates) traverse all the nodes in the range using \texttt{getNext}.
    This property should be guaranteed by the data structure specific implementation of its \texttt{getNext} function. For example, the \texttt{pre-range} phase in the linked list ends at a node whose value is less than the lowest value in the range.
    
    \item After reading the timestamp (Line~\ref{line:readglobalts} in Algorithm~\ref{algo:rangequery}), $\mathcal{N}$ can be used to sequentially (i.e., in absence of concurrent updates) traverse to the first node in the range using \texttt{getNextFromBundle}. 
    This property should also be guaranteed by the data structure specific implementation. For example, in the linked list we have two cases. The first case is when $\mathcal{N}$ is still in the list after reading the timestamp, which means that trivially a sequence of \texttt{getNextFromBundle} will lead to the first node in the range (just like \texttt{getNext}). The second case is when $\mathcal{N}$ is deleted after the \texttt{pre-range} phase and before reading the timestamp.
    In this case, adding a bundle entry in $\mathcal{N}$ during its removal redirects the range query to the head node and enforces a consistent traversal to the first node of the range using \texttt{getNextFromBundle}.
\end{itemize}

By leveraging these three characteristics, RQ can sequentially (i.e., in absence of concurrent updates) traverse all nodes in the range using \texttt{getNextFromBundle}.

It remains to demonstrate how the \texttt{enter-range} and \newline\texttt{collect-range} phases (i.e., after reading the timestamp) interact with concurrent updates. To cover those cases, assume an arbitrary key $k$, within RQ's range. We then have the following possible executions:  
\begin{enumerate}
    \item No node with a key equal to $k$ exists in the data structure when the range query is invoked, and either:
    \begin{enumerate}
        \item No insert operation on $k$ is executed concurrently with RQ. Thus, RQ traverses using \texttt{GetNextFromBundle} and will not visit any node with key $k$.
        \item An insert($k$) operation INS is executed concurrently with RQ, for which there are two possibilities:
        \begin{enumerate}
            \item INS is linearized before RQ. If INS increments \texttt{globalTS} before RQ takes its snapshot of \texttt{globalTS}, it implies that INS also installed the necessary pending bundle entries before RQ reads \texttt{globalTS}. This means that RQ will block until INS finishes, and thus RQ will include $k$ since INS finalizes the bundles with a timestamp satisfying $t_{RQ}$.
            \item INS is linearized after RQ. If INS increments \texttt{globalTS} after RQ takes its snapshot of \texttt{globalTS}, then regardless of how RQ proceeds it will not include $k$ because the bundle entries finalized by INS do not satisfy $t_{RQ}$, and thus will not be traversed using \texttt{GetNextFromBundle}.
        \end{enumerate}
    \end{enumerate}
    \item There is a node with key equal to $k$ in the data structure when the range query is invoked, and either:
    \begin{enumerate}
        \item No remove operation on $k$ is executed concurrently with RQ. In this case, RQ will include $k$ in its snapshot using \texttt{GetNextFromBundle} to collect the result set.
        \item A remove($k$) operation REM is executed concurrently with RQ. Again, two situations arise:
        \begin{enumerate}
            \item REM is linearized before RQ. If REM increments \texttt{globalTS} before RQ takes the snapshot of \texttt{globalTS}, this implies that REM also installs the necessary pending Bundle entries before RQ takes the snapshot, which means that RQ will have to block until REM finishes, and thus RQ will not include $k$.
            \item REM is linearized after RQ. If REM increments \texttt{globalTS} after RQ takes the snapshot of \texttt{globalTS}, then regardless of how RQ proceeds it will include $k$ because the bundle entries finalized by REM have a larger timestamp, and thus $k$ is still reachable using \texttt{GetNextFromBundle}.
        \end{enumerate}
    \end{enumerate}
    \item Any more complex execution that includes multiple concurrent update operations on $k$ can be reduced to one of the above executions since those update operations are conflicting and will block each other using the pending bundle entries.
\end{enumerate}
Finally we show that contains operation is linearizable. Unfortunately, we cannot use the same linearization point of range query for contains because of our optimization that eliminates the step of reading \texttt{globalTs}. However, it is still easy to reason about the linearizability of contains by inspecting the moment at which it dereferences the last bundle entry in the \texttt{enter-range} phase. We have two scenarios. If the update operation that adds this entry increments \texttt{globalTs} before the contains starts, then the contains can be linearized at the moment it starts. Otherwise, the contains is linearized right after the \texttt{globalTs} increment occurs, which is necessarily between the invocation and the return of the contains operation since the contains will block on the entry if it is not yet finalized.
\end{proof}

\section{Additional Plots}
\label{appendix:addition-plots}

Figure~\ref{fig:extra-plots} summarizes our performance evaluation at 2\% and 50\% range query operations. Foremost, in both configurations bundling continues to perform well under read-dominant workloads. At higher update workloads, EBR-RQ performs best in the 2\% range query workloads, because of its limited contention on the global timestamp for this workload. Meanwhile bundling and RLU perform similarly to results show in Figure~\ref{fig:fig2}. In contrast, 50\% range query workloads tend to favor RLU over EBR-RQ, with bundling equaling or outperforming RLU.

\begin{figure}[h]
    \centering
    \begin{subfigure}{\linewidth}
        \centering
        \includegraphics[width=.8\textwidth]{figures/legend-microbench.jpg}
    \end{subfigure}
    \begin{subfigure}{0.17\textwidth}
        \includegraphics[width=\textwidth]{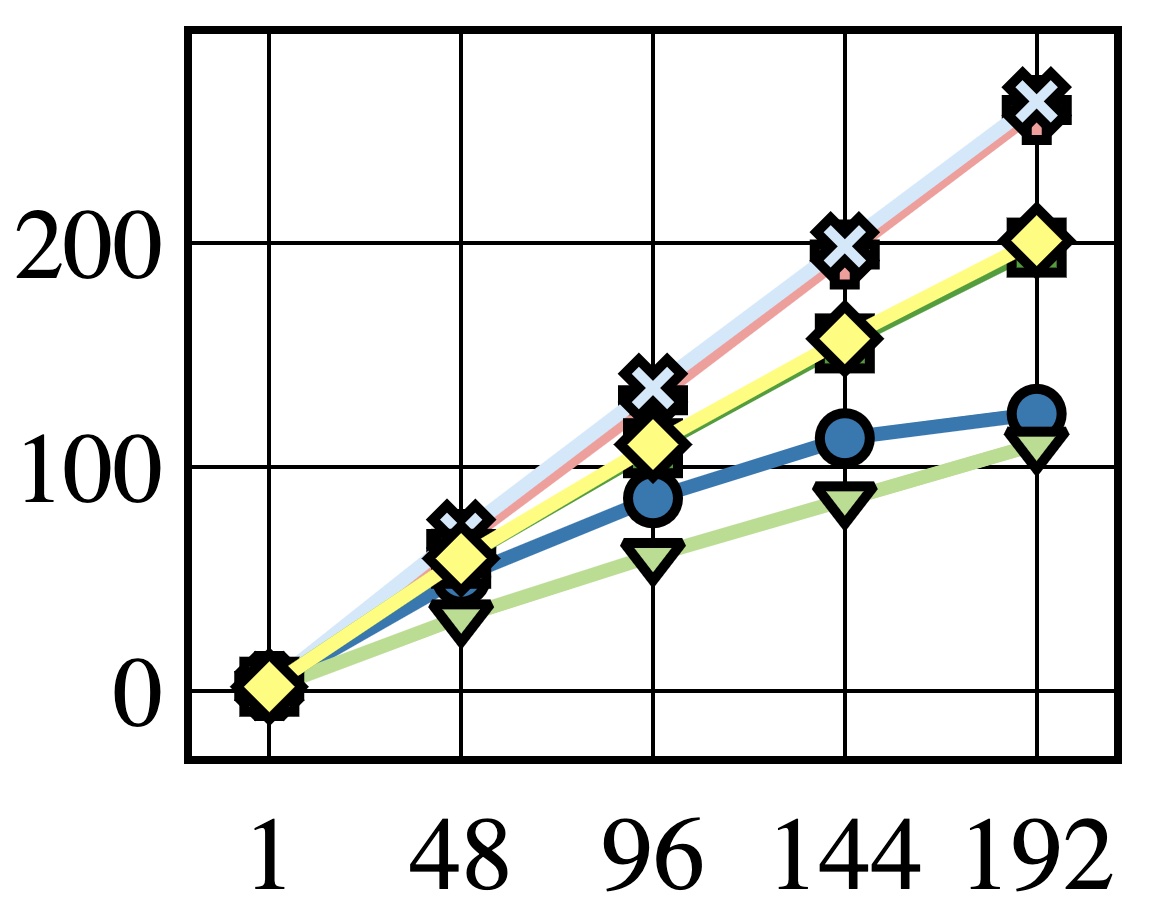}
        \caption{SL, $0-98-2$}
    \end{subfigure}
    \begin{subfigure}{0.17\textwidth}
        \includegraphics[width=\textwidth]{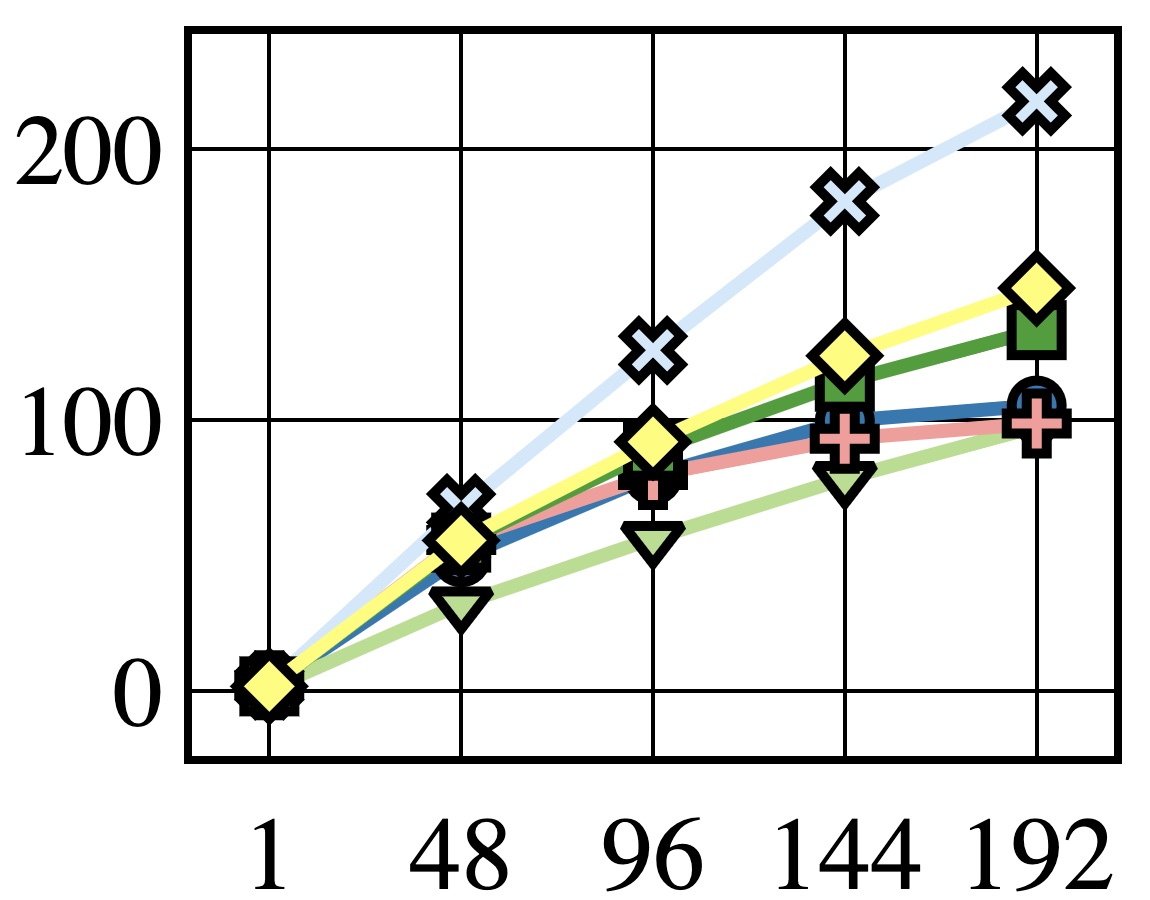}
        \caption{SL, $2-94-2$}
    \end{subfigure}
    \begin{subfigure}{0.17\textwidth}
        \includegraphics[width=\textwidth]{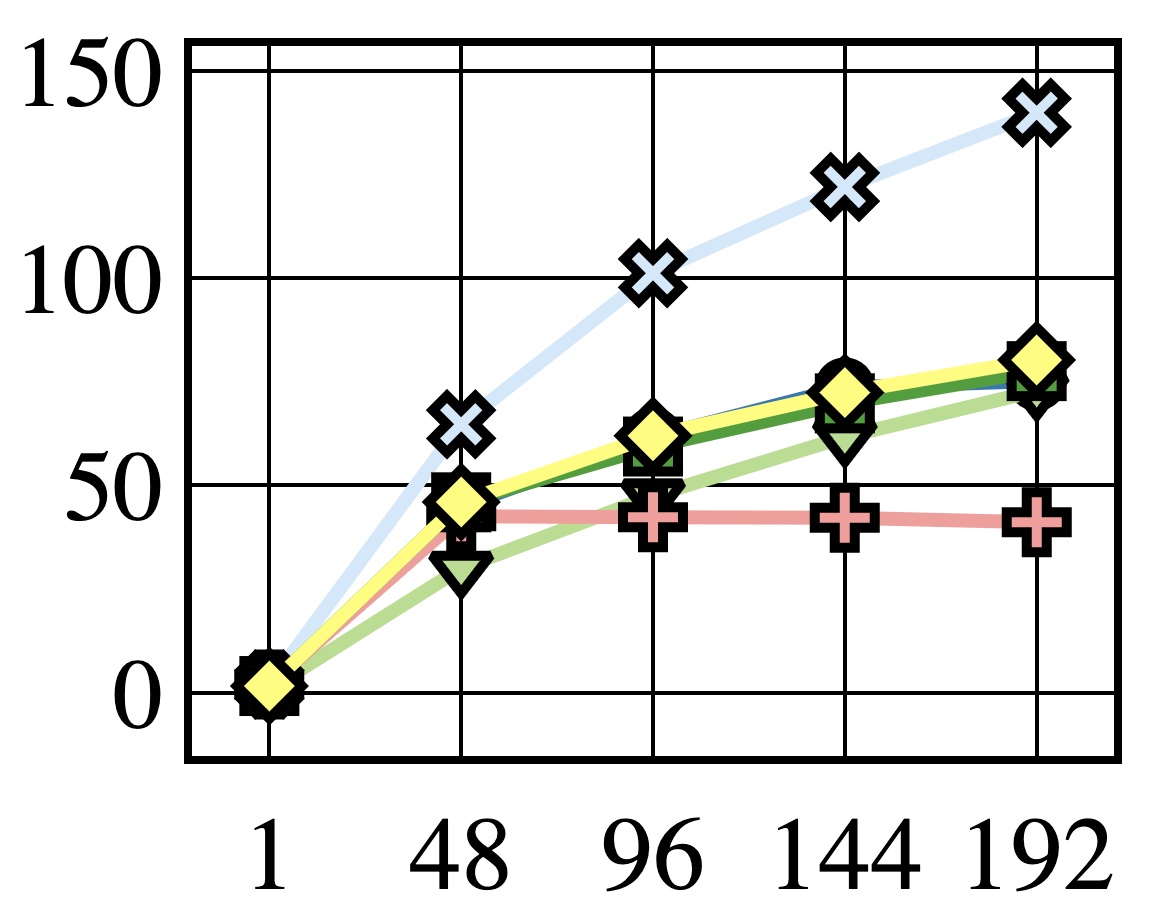}
        \caption{SL, $10-88-2$}
    \end{subfigure}
    \begin{subfigure}{0.17\textwidth}
        \includegraphics[width=\textwidth]{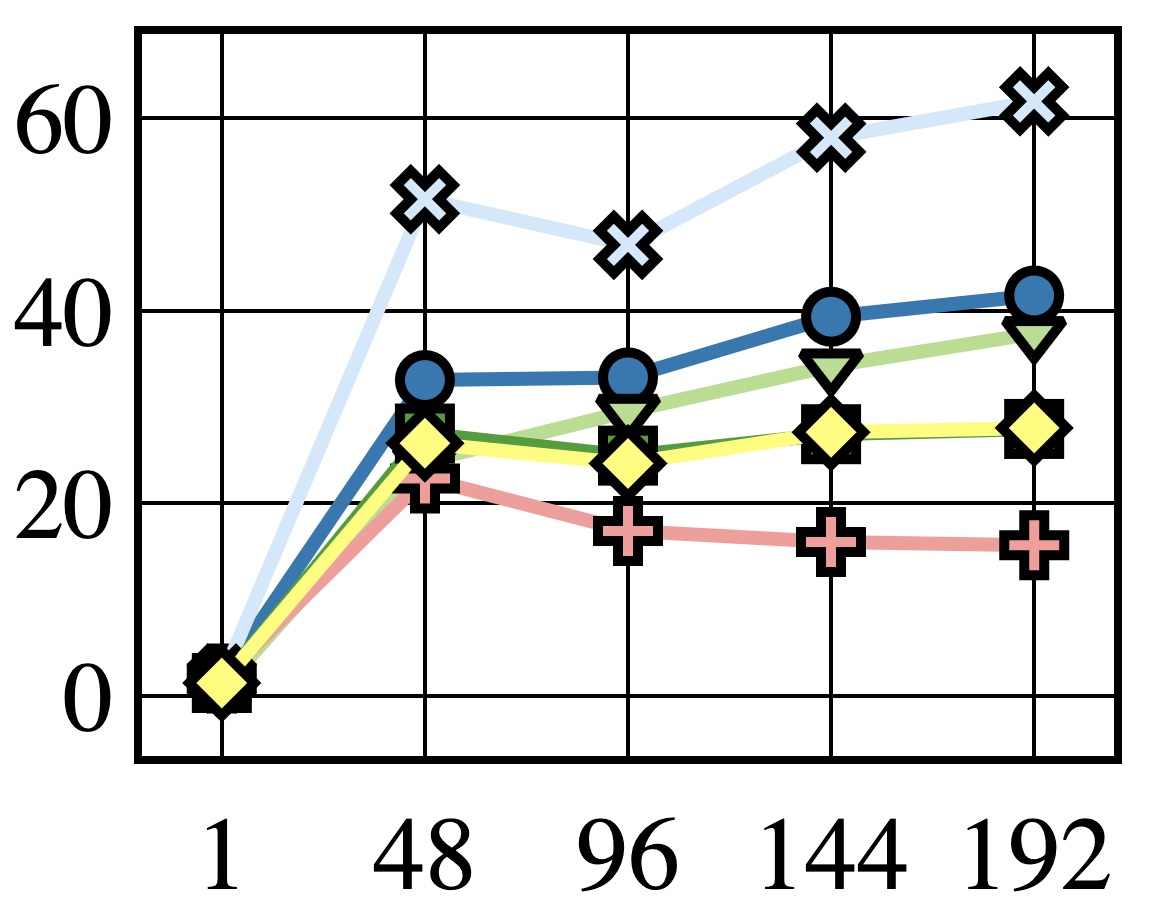}
        \caption{SL, $50-48-2$}
    \end{subfigure}
    \begin{subfigure}{0.17\textwidth}
        \includegraphics[width=\textwidth]{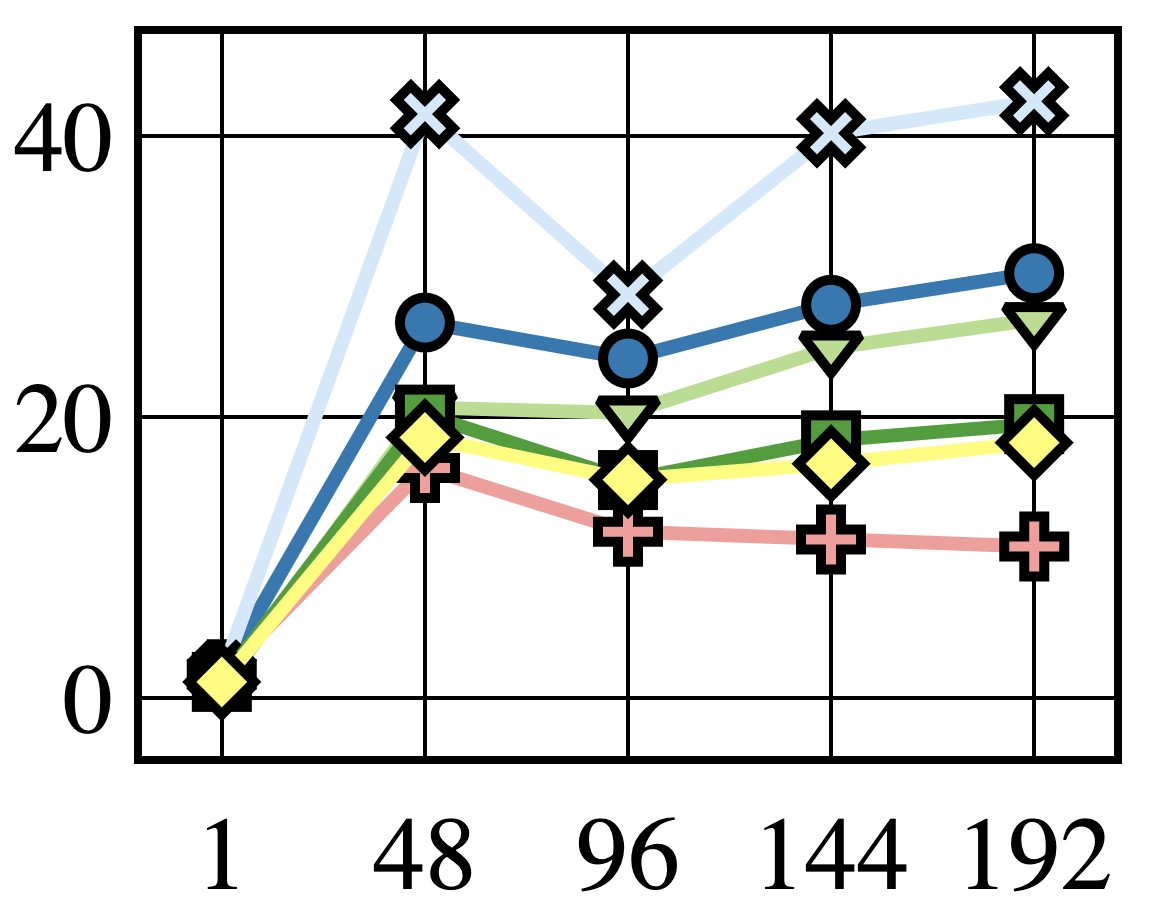}
        \caption{SL, $90-8-2$}
    \end{subfigure}
    \begin{subfigure}{0.17\textwidth}
        \includegraphics[width=\textwidth]{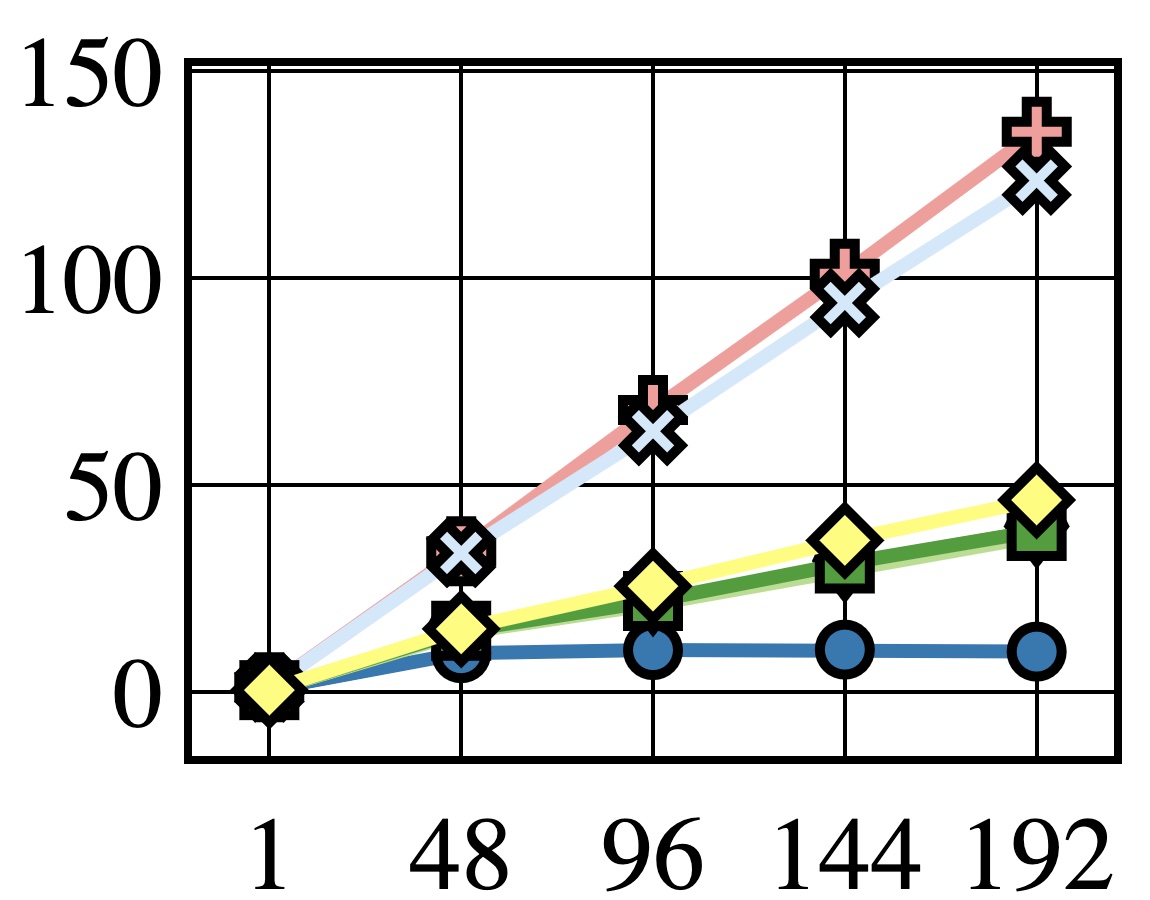}
        \caption{SL, $0-50-50$}
    \end{subfigure}
    \begin{subfigure}{0.17\textwidth}
        \includegraphics[width=\textwidth]{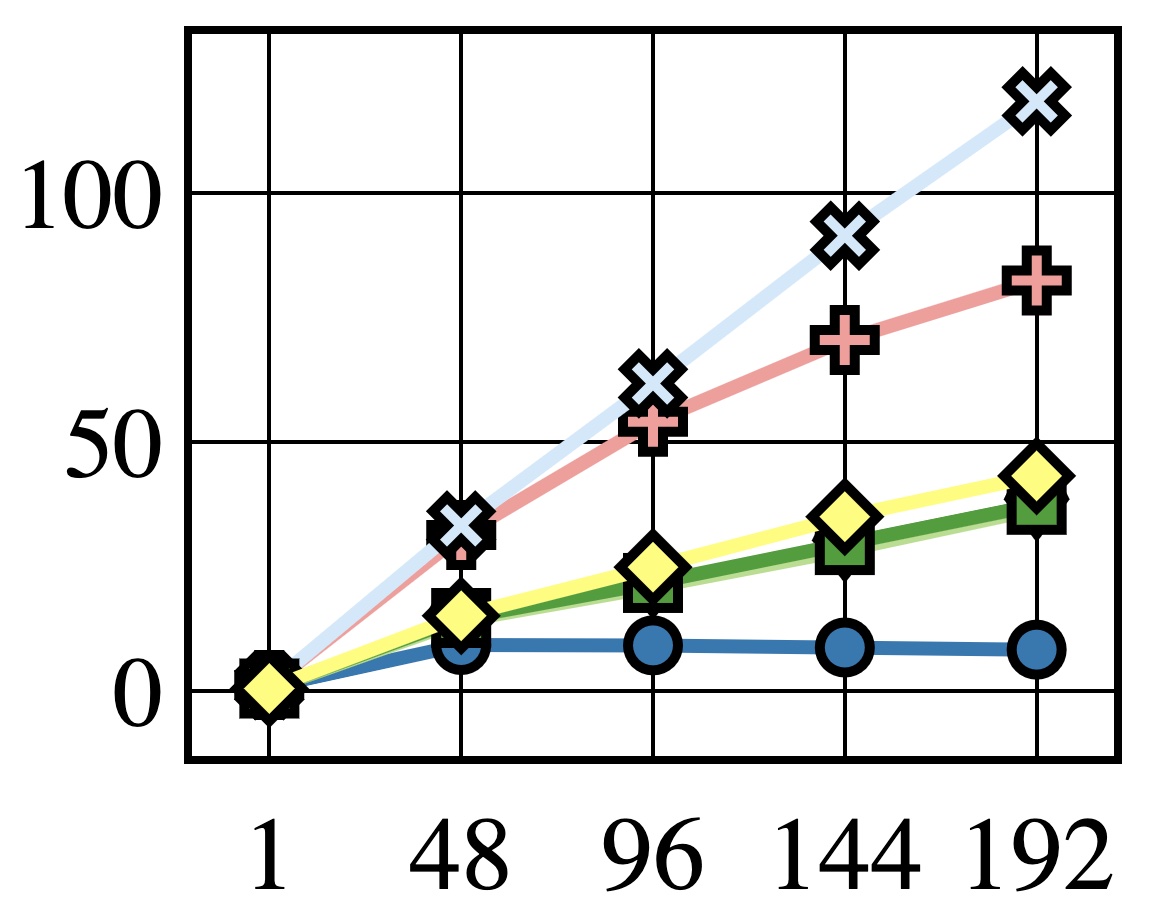}
        \caption{CT, $2-48-50$}
    \end{subfigure}
    \begin{subfigure}{0.17\textwidth}
        \includegraphics[width=\textwidth]{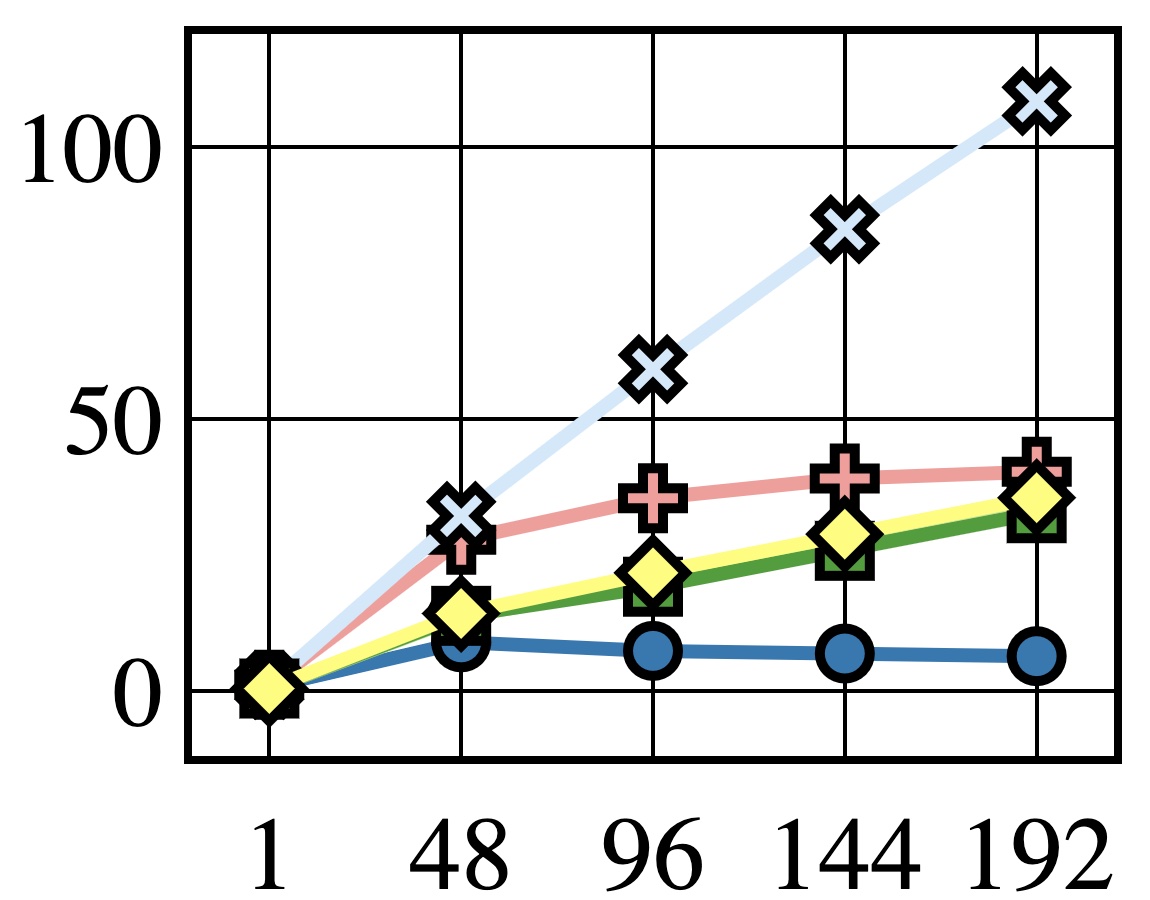}
        \caption{CT, $10-40-50$}
    \end{subfigure}
    \begin{subfigure}{0.17\textwidth}
        \includegraphics[width=\textwidth]{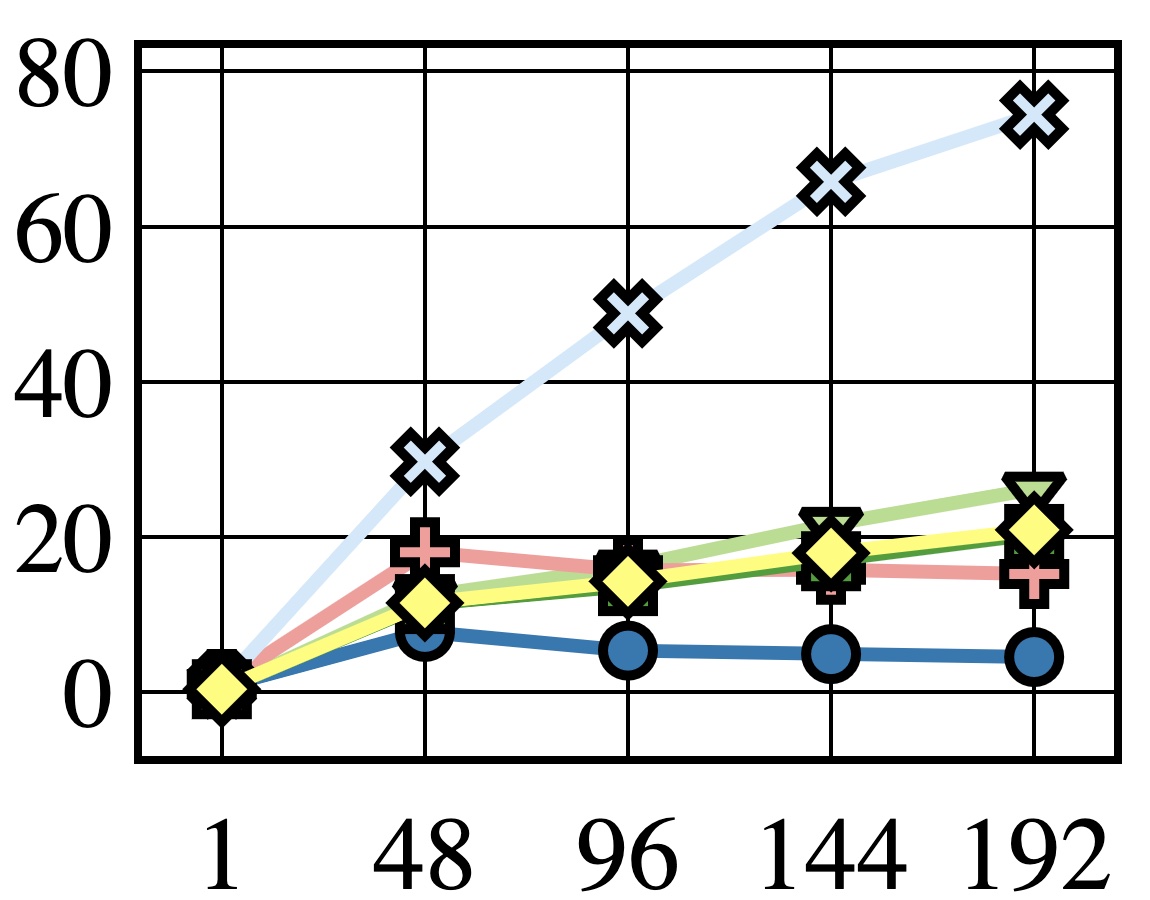}
        \caption{CT, $50-0-50$}
    \end{subfigure}
    \caption{Throughput (Mops/s) under various workload configurations for the skip list (SL) and Citrus tree (CT), with the number of threads on the x-axis. Workloads are written as $U-C-RQ$, corresponding to the percentages of update ($U$), contains ($C$) and range queries ($RQ$).}
    \label{fig:extra-plots}
\end{figure}

\end{document}